\newtheorem{observation}{Observation}
\newtheorem{notation}{Notation}
\newif\ifabstract
\newif\iffull
\newcounter{section-preserve}
\newcounter{theorem-preserve}
\newcounter{lemma-preserve}
\newcommand{\blank}[1]{}
\newtoks\magicAppendix
\newtoks\magictoks
\newif\iflater
\long\def\later#1{\magictoks={#1}%
  \edef\magictodo{\noexpand\magicAppendix={\the\magicAppendix \par
    \the\magictoks%
  }}
  \magictodo}
\long\def\both#1{\magictoks={#1}%
  \edef\magictodo{\noexpand\magicAppendix={\the\magicAppendix \par
    \noexpand\setcounter{theorem-preserve}{\noexpand\arabic{theorem}}%
    \noexpand\setcounter{theorem}{\arabic{theorem}}%
    \noexpand\setcounter{lemma-preserve}{\noexpand\arabic{lemma}}%
    \noexpand\setcounter{lemma}{\arabic{lemma}}%
    \noexpand\setcounter{section-preserve}{\noexpand\arabic{section}}%
    \noexpand\setcounter{section}{\arabic{section}}%
	\noexpand\let\noexpand\oldsection=\noexpand\thesection
	\noexpand\def\noexpand\thesection{\thesection}
	\noexpand\let\noexpand\oldlabel=\noexpand\label
	\noexpand\let\noexpand\label=\noexpand\blank
    \the\magictoks%
    \noexpand\setcounter{theorem}{\noexpand\arabic{theorem-preserve}}%
    \noexpand\setcounter{section}{\noexpand\arabic{section-preserve}}%
	\noexpand\let\noexpand\thesection=\noexpand\oldsection
	\noexpand\let\noexpand\label=\noexpand\oldlabel
  }}
  \magictodo
  \the\magictoks}
\long\def\later#1{#1}
\long\def\both#1{#1}
\long\def\magicappendix{
	\latertrue%
	\the\magicAppendix%
}
\begin{document}

\title{\textbf{Scaled tree fractals do not strictly self-assemble}}

\author{%
Kimberly Barth%
    \thanks{Department of Computer Science, University of Wisconsin--Oshkosh, Oshkosh, WI 54901, USA,
    \protect\url{barthk63@uwosh.edu}.}
\and
David Furcy
    \thanks{Department of Computer Science, University of Wisconsin--Oshkosh, Oshkosh, WI 54901, USA,
    \protect\url{furcyd@uwosh.edu}.}
\and
Scott M. Summers
    \thanks{Department of Computer Science, University of Wisconsin--Oshkosh, Oshkosh, WI 54901, USA,
    \protect\url{summerss@uwosh.edu}.}
\and
Paul Totzke
    \thanks{Department of Computer Science, University of Wisconsin--Oshkosh, Oshkosh, WI 54901, USA,
    \protect\url{totzkp00@uwosh.edu}.}
}

\institute{}

\date{}
\maketitle

\sloppy

\begin{abstract}
In this paper, we show that any scaled-up version of any discrete self-similar \emph{tree} fractal does not strictly self-assemble, at any temperature, in Winfree's abstract Tile Assembly Model.
\end{abstract} 
\section{Introduction}
The stunning, often mysterious complexities of the natural world, from nanoscale crystalline structures to unthinkably massive galaxies, all arise from the same elemental process known as \emph{self-assembly}. In the absence of a mathematically rigorous definition, self-assembly is colloquially thought of as the process through which simple, unorganized components spontaneously combine, according to local interaction rules, to form some kind of organized final structure. A major objective of nanotechnology is to harness the power of self-assembly, perhaps for the purpose of engineering atomically precise medical, digital and mechanical components at the nanoscale. One strategy for doing so, developed by Nadrian Seeman, is \emph{DNA tile self-assembly} \cite{Seem82, Seem90}.

In DNA tile self-assembly, the fundamental components are ``tiles'', which are comprised of interconnected DNA strands. Remarkably, these DNA tiles can be ``programmed'', via the careful configuration of their constituent DNA strands, to automatically coalesce into a desired target structure, the characteristics of which are completely determined by the ``programming'' of the DNA tiles. In order to fully realize the power of DNA tile self-assembly, we must study the algorithmic and mathematical underpinnings of tile self-assembly.

Perhaps the simplest mathematical model of algorithmic tile self-assembly is Erik Winfree's abstract Tile Assembly Model (aTAM) \cite{Winf98}. The aTAM is a deliberately over-simplified, combinatorial model of nanoscale (DNA) tile self-assembly that ``effectivizes'' classical Wang tiling \cite{Wang61} in the sense that the former augments the latter with a mechanism for sequential ``growth'' of a tile assembly. Very briefly, in the aTAM, the fundamental components are un-rotatable, translatable square ``tile types'' whose sides are labeled with (alpha-numeric) glue ``colors'' and (integer) ``strengths''. Two tiles that are placed next to each other \emph{interact} if the glue colors on their abutting sides match, and they \emph{bind} if the strengths on their abutting sides match and sum to at least a certain (integer) ``temperature''. Self-assembly starts from a ``seed'' tile type, typically assumed to be placed at the origin, and proceeds nondeterministically and asynchronously as tiles bind to the seed-containing assembly one at a time.

Despite its deliberate over-simplification, the aTAM is a computationally expressive model. For example, Winfree \cite{Winf98} proved that the model is Turing universal, which means that, in principle, the process of self-assembly can be directed by any algorithm. In this paper, we will specifically study the extent to which tile sets in the aTAM can be algorithmically directed to ``strictly'' self-assemble (place tiles at and only at locations that belong to) shapes that are discrete self-similar tree fractals.

There are examples of prior results related to the self-assembly of fractals in the aTAM, in general \cite{jSADSSF, KautzShutters11, KL09}, as well as the strict self-assembly of tree fractals in the aTAM, specifically \cite{jSSADST, LutzShutters12}. In fact, a notable example of the latter is \cite{jSSADST}, Theorem 3.2 of which bounds from below the size of the smallest tile set in which an arbitrary shape $X$ strictly self-assembles by the depth of $X$'s largest finite sub-tree. Although not stated explicitly, an immediate corollary of Theorem 3.2 of \cite{jSSADST} is that no tree fractals strictly self-assemble in the aTAM.

While the strict self-assembly of tree fractals in the aTAM is well-understood (via Theorem 3.2 of \cite{jSSADST}), nothing is known about the strict self-assembly of ``scaled-up'' versions of tree fractals (``scaled-up'' meaning each point in the original shape is replaced by a $c \times c$ block of points). After all, the scaled-up version of any shape -- tree or otherwise -- is \emph{not} a tree in the sense of the ``full connectivity graph'' of the shape, i.e., each point in the shape is represented by one vertex and edges exist between vertices that represent adjacent points in the shape. This means that prior proof techniques, which exploit the intricate geometry of tree fractals (e.g., \cite{jSSADST, LutzShutters12}), simply cannot be applied to scaled-up versions of tree fractals. Thus, in this paper, we ask if it is possible for a scaled-up version of a tree fractal to strictly self-assemble in the aTAM.

The main contribution of this paper provides an answer to the previous question, perhaps not too surprisingly to readers familiar with the aTAM, in the negative: we prove that there is no tree fractal that strictly self-assembles in the aTAM -- at any positive scale factor. Thus, our main result generalizes Theorem 3.4 of \cite{jSSADST}, which says that the Sierpinski triangle, perhaps the most famous, well-studied example of a tree fractal, does not strictly self-assemble at scale factor $1$. Our proof makes crucial use of a recent technical lemma developed by Meunier, Patitz, Summers, Theyssier, Winslow and Woods \cite{WindowMovieLemma}, known as the ``Window Movie Lemma'' (WML), which gives a sufficient condition for taking any pair of tile assemblies, at any temperature, and ``splicing'' them together to create a new valid tile assembly. The WML is, in some sense, a pumping lemma for self-assembly that mitigates the need to use overly-complicated, convoluted case-analyses that typically arise when doing impossibility proofs in self-assembly.

What follows is a list of the main technical contributions presented in -- and the general outline of -- this paper:
\begin{itemize}
    \item In Section~\ref{sec:tree-fractals}, we exhibit a natural
      characterization of a tree fractal in terms of a few simple,
      easily checkable geometric properties of its generator. While
      perhaps well-known, this type of characterization, to the best
      of our knowledge, has yet to be explicitly documented or proved
      in the literature.
    \item In Section~\ref{sec:window-movie-lemma}, we develop a
      modified version of the general WML. Our version of the WML,
      which we call the ``Closed Window Movie Lemma'', allows us to
      replace one portion of a tile assembly with another, assuming
      certain extra ``containment'' conditions are met. Moreover,
      unlike in the original WML that lacks the extra containment
      assumptions, the replacement of one tile assembly with another
      in our Closed WML only goes ``one way'', i.e., the part
      of the tile assembly being used to replace another part cannot
      itself be replaced by the part of the tile assembly it is
      replacing.  
    \item In Section~\ref{sec:main-result}, we use our closed WML to prove that any scaled-up version of any tree fractal does not strictly self-assemble in the aTAM at any temperature. Our main result generalizes the claim that every tree fractal, at scale factor 1, does not strictly self-assemble in the aTAM (an implicit corollary to the main negative result of \cite{jSSADST}).
\end{itemize}

\section{Definitions}\label{sec-definitions}
In this section, we give a formal definition of Erik Winfree's abstract Tile Assembly Model (aTAM), define and characterize tree fractals and develop a ``Closed'' Window Movie Lemma.
\subsection{Formal description of the abstract Tile Assembly Model}
\label{sec:tam-formal}
This section gives a formal definition of the abstract Tile Assembly Model (aTAM)~\cite{Winf98}. For readers unfamiliar with the aTAM,~\cite{Roth01} gives an excellent introduction to the model.

Fix an alphabet $\Sigma$.
$\Sigma^*$ is the set of finite strings over $\Sigma$. Let $\Z$, $\Z^+$, and $\N$ denote the set of integers, positive integers, and nonnegative integers, respectively. Given $V \subseteq \Z^2$, the \emph{full grid graph} of $V$ is the undirected graph $\fullgridgraph_V=(V,E)$,
such that, for all $\vec{x}, \vec{y}\in V$, $\left\{\vec{x},\vec{y}\right\} \in E \iff \| \vec{x} - \vec{y}\| = 1$, i.e., if and only if $\vec{x}$ and $\vec{y}$ are adjacent in the $2$-dimensional integer Cartesian space.

A \emph{tile type} is a tuple $t \in (\Sigma^* \times \N)^{4}$, e.g., a unit square, with four sides, listed in some standardized order, and each side having a \emph{glue} $g \in \Sigma^* \times \N$ consisting of a finite string \emph{label} and a nonnegative integer \emph{strength}.

We assume a finite set of tile types, but an infinite number of copies of each tile type, each copy referred to as a \emph{tile}. A tile set is a set of tile types and is usually denoted as $T$.

A {\em configuration} is a (possibly empty) arrangement of tiles on the integer lattice $\Z^2$, i.e., a partial function $\alpha:\Z^2 \dashrightarrow T$.
Two adjacent tiles in a configuration \emph{interact}, or are \emph{attached}, if the glues on their abutting sides are equal (in both label and strength) and have positive strength.
Each configuration $\alpha$ induces a \emph{binding graph} $\bindinggraph_\alpha$, a grid graph whose vertices are positions occupied by tiles, according to $\alpha$, with an edge between two vertices if the tiles at those vertices interact. An \emph{assembly} is a connected, non-empty configuration, i.e., a partial function $\alpha:\Z^2 \dashrightarrow T$ such that $\fullgridgraph_{\dom \alpha}$ is connected and $\dom \alpha \neq \emptyset$.
%The \emph{shape} $S_\alpha \subseteq \Z^d$ of $\alpha$ is $\dom \alpha$.

Given $\tau\in\Z^+$, $\alpha$ is \emph{$\tau$-stable} if every cut-set
of~$\bindinggraph_\alpha$ has weight at least $\tau$, where the weight
of an edge is the strength of the glue it represents.\footnote{A
  \emph{cut-set} is a subset of edges in a graph which, when removed from
  the graph, produces two or more disconnected subgraphs. The
  \emph{weight} of a cut-set is the sum of the weights of all of the edges
  in the cut-set.}  When $\tau$ is clear from context, we say $\alpha$ is
\emph{stable}.  Given two assemblies $\alpha,\beta$, we say $\alpha$
is a \emph{subassembly} of $\beta$, and we write $\alpha \sqsubseteq
\beta$, if $\dom\alpha \subseteq \dom\beta$ and, for all points $p \in
\dom\alpha$, $\alpha(p) = \beta(p)$. For two non-overlapping
assemblies $\alpha$ and $\beta$, $\alpha \cup \beta$ is defined as the
unique assembly $\gamma$ satisfying, for all $\vec{x} \in
\dom{\alpha}$, $\gamma(\vec{x}) = \alpha(\vec{x})$, for all $\vec{x}
\in \dom{\beta}$, $\gamma(\vec{x}) = \beta(\vec{x})$, and
$\gamma(\vec{x})$ is undefined at any point $\vec{x} \in \Z^2
\backslash \left( \dom{\alpha} \cup \dom{\beta} \right)$.

A \emph{tile assembly system} (TAS) is a triple $\mathcal{T} = (T,\sigma,\tau)$, where $T$ is a tile set, $\sigma:\Z^2 \dashrightarrow T$ is the finite, $\tau$-stable, \emph{seed assembly}, and $\tau\in\Z^+$ is the \emph{temperature}.

Given two $\tau$-stable assemblies $\alpha,\beta$, we write $\alpha \to_1^{\mathcal{T}} \beta$ if $\alpha \sqsubseteq \beta$ and $|\dom\beta \setminus \dom\alpha| = 1$. In this case we say $\alpha$ \emph{$\mathcal{T}$-produces $\beta$ in one step}. If $\alpha \to_1^{\mathcal{T}} \beta$, $ \dom\beta \setminus \dom\alpha=\{p\}$, and $t=\beta(p)$, we write $\beta = \alpha + (p \mapsto t)$.
The \emph{$\mathcal{T}$-frontier} of $\alpha$ is the set $\partial^\mathcal{T} \alpha = \bigcup_{\alpha \to_1^\mathcal{T} \beta} (\dom\beta \setminus \dom\alpha$), the set of empty locations at which a tile could stably attach to $\alpha$. The \emph{$t$-frontier} $\partial^\mathcal{T}_t \alpha \subseteq \partial^\mathcal{T} \alpha$ of $\alpha$ is the set $\setr{p\in\partial^\mathcal{T} \alpha}{\alpha \to_1^\mathcal{T} \beta \text{ and } \beta(p)=t}.$

Let $\mathcal{A}^T$ denote the set of all assemblies of tiles from $T$, and let $\mathcal{A}^T_{< \infty}$ denote the set of finite assemblies of tiles from $T$.
A sequence of $k\in\Z^+ \cup \{\infty\}$ assemblies $\alpha_0,\alpha_1,\ldots$ over $\mathcal{A}^T$ is a \emph{$\mathcal{T}$-assembly sequence} if, for all $1 \leq i < k$, $\alpha_{i-1} \to_1^\mathcal{T} \alpha_{i}$.
The {\em result} of an assembly sequence $\vec{\alpha}$, denoted as $\textmd{res}(\vec{\alpha})$, is the unique limiting assembly (for a finite sequence, this is the final assembly in the sequence).

We write $\alpha \to^\mathcal{T} \beta$, and we say $\alpha$ \emph{$\mathcal{T}$-produces} $\beta$ (in 0 or more steps) if there is a $\mathcal{T}$-assembly sequence $\alpha_0,\alpha_1,\ldots$ of length $k = |\dom\beta \setminus \dom\alpha| + 1$ such that
%\begin{enumerate}
(1) $\alpha = \alpha_0$,
(2) $\dom\beta = \bigcup_{0 \leq i < k} \dom{\alpha_i}$, and
(3) for all $0 \leq i < k$, $\alpha_{i} \sqsubseteq \beta$.
%\end{enumerate}
If $k$ is finite then it is routine to verify that $\beta = \alpha_{k-1}$.

We say $\alpha$ is \emph{$\mathcal{T}$-producible} if $\sigma \to^\mathcal{T} \alpha$, and we write $\prodasm{\mathcal{T}}$ to denote the set of $\mathcal{T}$-producible assemblies. The relation $\to^\mathcal{T}$ is a partial order on $\prodasm{\mathcal{T}}$ \cite{Roth01,jSSADST}.

An assembly $\alpha$ is \emph{$\mathcal{T}$-terminal} if $\alpha$ is $\tau$-stable and $\partial^\mathcal{T} \alpha=\emptyset$.
We write $\termasm{\mathcal{T}} \subseteq \prodasm{\mathcal{T}}$ to denote the set of $\mathcal{T}$-producible, $\mathcal{T}$-terminal assemblies. If $|\termasm{\mathcal{T}}| = 1$ then  $\mathcal{T}$ is said to be {\em directed}.

We say that a TAS $\mathcal{T}$ \emph{strictly (a.k.a. uniquely) self-assembles} $X \subseteq \Z^2$ if, for all $\alpha \in \termasm{\mathcal{T}}$, $\dom\alpha = X$; i.e., if every terminal assembly produced by $\mathcal{T}$ places tiles on -- and only on -- points in the set $X$.

\begin{comment}
When $\mathcal{T}$ is clear from context, we may omit $\mathcal{T}$ from the notation above and instead write
$\to_1$,
$\to$,
$\partial \alpha$,
\emph{assembly sequence},
\emph{produces},
\emph{producible}, and
\emph{terminal}.
\end{comment}

In this paper, we consider scaled-up versions of subsets of $\Z^2$. Formally, if $X$ is a subset of $\Z^2$ and $c \in \Z^+$, then a $c$-\emph{scaling} of $X$ is defined as the set $X^c = \left\{ (x,y) \in \mathbb{Z}^2 \; \left| \; \left( \left\lfloor \frac{x}{c} \right\rfloor, \left\lfloor \frac{y}{c} \right\rfloor \right) \in X \right.\right\}$. Intuitively, $X^c$ is the subset of $\Z^2$  obtained by replacing each point in $X$ with a $c \times c$ block of points. We refer to the natural number $c$ as the \emph{scaling factor} or \emph{resolution loss}.

\subsection{Discrete self-similar tree fractals}\label{sec:tree-fractals}
In this section, we introduce a new formal characterization of
discrete self-similar tree fractals. The proof of
Theorem~\ref{thm:tree-fractal} below is omitted from this version of the
paper due to lack of space.
%\ifabstract The proof of
%Theorem~\ref{thm:tree-fractal} in this sub-section is given in the
%appendix (see Section~\ref{sec:appendix}).  \fi

\begin{comment}
Fractals occur frequently in the natural world consider plants, sea
shells, or the nervous system.  Fractals are advantageous to processes such
as materials transport, heat exchange, and information processing
\cite{LutzShutters12}. We will be dealing with a class of fractals
called discrete self-similar fractals.  Discrete self-similar fractals
consist of many copies of a structure called a generator (stage one).  At the
second stage copies of the generator are arranged to form a scaled up
version of the orignal structure.  This process repeats indefinatly
always using the previous stage to form a larger structure. The
abstract tile assembly model is a computational model that allows these stuctures to be studied.\\
\end{comment}

\begin{notation}
We use $\mathbb{N}_g$ to denote the subset $\{0, \ldots, g-1\}$ of
$\mathbb{N}$.
\end{notation}

\begin{notation}
If $A$ and $B$ are subsets of $\N^2$ and $k\in \N$, then $A+kB = \{\vec{m}+k\vec{n}~|~\vec{m}\in A$ and $\vec{n}\in B\}$.
\end{notation}

The following definition is adapted from \cite{jSADSSF}.

\begin{definition}\label{def:dssf}
Let $1<g \in \mathbb{N}$ and $\mathbf{X} \subset \mathbb{N}^2$. We say that
$\mathbf{X}$ is a \emph{$g$-discrete self-similar fractal} (or \emph{$g$-dssf}
for short), if there is a set $\{(0,0)\} \subset G \subset
\mathbb{N}_g^2$ with at least one point in every row and column, such
that $\displaystyle \mathbf{X} = \bigcup_{i=1}^{\infty}X_i$, where $X_i$, the
$i^{th}$ \emph{stage} of $\mathbf{X}$, is defined by $X_1= G$ and
%%$X_{i+1}= X_i \cup(X_i\ +\ c^iG)$
$X_{i+1}= X_i\ +\ g^iG$. We say that $G$ is the
\emph{generator} of $\mathbf{X}$.
\end{definition}

Intuitively, a $g$-dssf is built as follows. Start by selecting points
in $\N_g^2$ satisfying the constraints listed in
Definition~\ref{def:dssf}. This first stage of the fractal is the
generator. Then, each subsequent stage of the fractal is obtained by
adding a full copy of the previous stage for every point in the
generator and translating these copies so that their relative
positions are identical to the relative positions of the individual
points in the gnerator.

In this paper, we focus on \emph{tree} fractals, that is, fractals whose
underlying graph is a tree. We introduce terminology and notation that
will help us in formulating a complete characterization of tree
fractals in terms of geometric properties of their generator.

\begin{definition}
Let $S$ be any finite subset of $\mathbb{Z}^2$. Let $l$, $r$, $b$, and
$t$ denote the following integers:

\centerline{$\displaystyle l_S = \min_{(x,y) \in S} x\qquad r_S =
  \max_{(x,y) \in S} x\qquad b_S = \min_{(x,y) \in S} y\qquad t_S =
  \max_{(x,y) \in S} y$}

An \emph{h-bridge} of $S$ is any subset of $S$ of the form
$hb_S(y) = \{(l_S,y),(r_S,y)\}$. Similarly, a \emph{v-bridge} of $S$ is any
subset of $S$ of the form $vb_S(x) = \{(x,b_S),(x,t_S)\}$. We say that a bridge is \emph{connected} if there is a simple path in $S$ connecting the two bridge points.
\end{definition}

\begin{comment}
Intuitively, an h-bridge is a pair of points that connect two
horizontally adjacent graphs. More formally, let $w_S$ denote the
``width'' of $S$ (i.e., $w_S = r_S-l_S+1$). If $S$ contains an h-bridge
$\{(l_S,y),(r_S,y)\}$, then the binding graphs of the two horizontally
adjacent sets $S$ and $S+(w_S,0)$ are connected at the points
$(r_S,y)$ and $(l_S+w_S,y)$. Similarly, the existence of a v-bridge
guarantees that two vertically adjacent graphs are connected.
\end{comment}

\begin{notation}
Let $S$ be any finite subset of $\Z^2$. We will denote by
$nhb_S$ and $nvb_S$, respectively, the number of h-bridges and
the number of v-bridges of $S$.
\end{notation}

\ifabstract
\later{

\begin{definition}
If $G$ is the generator of any $g$-discrete self-similar
fractal, then the \emph{interior} of $G$ is $G \cap (\N_{g-1}\times \N_{g-1})$.
\end{definition}

\begin{lemma}
\label{lem:north-free-point}
Let $G$ be any finite subset of $\N^2$ that has at least one connected
h-bridge. If $G$ contains a connected component $C \subset G$ such
that $C \cap (\N \times \{t_G\}) \ne \emptyset$ and $C \cap (\{l_G\}
\times \N) = \emptyset$, then there exists a point $\vec{x}_N \in G
\backslash C$ such that $N\left(\vec{x}_N\right) \not \in G$ and
$\vec{x}_N \not \in \N \times \{t_G\}$.
\end{lemma}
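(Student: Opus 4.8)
The plan is to reduce the statement to a purely column-by-column observation and then argue by contradiction. Call a point $\vec{x} \in G$ \emph{north-exposed} if $N(\vec{x}) \notin G$; the topmost point of every nonempty column is north-exposed, so it suffices to exhibit a nonempty column whose topmost point lies outside $C$ and strictly below the top row $y = t_G$. To organize the argument I would first fix a simple path $P \subseteq G$ witnessing the connected h-bridge, running from $(l_G, y^*)$ to $(r_G, y^*)$ for some height $y^*$, and let $D$ be the connected component of $G$ containing $P$. Two facts about $D$ drive everything: since $(l_G, y^*) \in D$ is a point in the leftmost column while $C \cap (\{l_G\} \times \N) = \emptyset$, we have $D \neq C$ and hence $D \cap C = \emptyset$; and since $P$ moves in unit steps from column $l_G$ to column $r_G$, a discrete intermediate-value argument shows $P$ (hence $D$) meets every column $x$ with $l_G \le x \le r_G$.

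Now suppose, for contradiction, that the desired point does not exist. Negating the conclusion says exactly that every north-exposed point of height strictly less than $t_G$ belongs to $C$. I would then fix an arbitrary column $x$ with $l_G \le x \le r_G$, pick a point $(x, b) \in P \subseteq D$, and show by upward induction that $(x, y) \in D$ for every $b \le y \le t_G$. The base case is immediate, and for the inductive step, if $(x, y) \in D$ with $y < t_G$ but $(x, y+1) \notin G$, then $(x, y)$ would be a north-exposed point of height below $t_G$ lying in $D$, hence outside $C$ (as $D \cap C = \emptyset$), contradicting the supposition; therefore $(x, y+1) \in G$, and adjacency forces $(x, y+1) \in D$. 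In particular $(x, t_G) \in D$ for every such $x$. Finally, taking $x = x_0$ where $(x_0, t_G) \in C$ is the given top-row point of $C$ (note $l_G \le x_0 \le r_G$), we get $(x_0, t_G) \in D \cap C$, contradicting $D \cap C = \emptyset$. This contradiction yields the required $\vec{x}_N$.

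The main work, and the step I would be most careful with, is the upward induction that ``fills'' each bridge column all the way to the top within $D$: its validity hinges on $D$ and $C$ being disjoint components, which is the sole place the hypothesis $C \cap (\{l_G\} \times \N) = \emptyset$ is used, and on the observation that a gap directly above a $D$-point below the top row would itself produce a north-exposed witness outside $C$. I would also take care to justify that the bridge path meets every intermediate column, so that each column actually contains a $D$-point from which to start the induction. It is worth noting that this approach sidesteps the delicate planar separation reasoning one might anticipate from the geometry of $C$ hanging from the top row: the connectivity of the bridge together with the disjointness $D \cap C = \emptyset$ does all the work, and the hypothesis that $G$ has a connected h-bridge is exactly what supplies a left-to-right spanning component in which to run the column-filling induction.
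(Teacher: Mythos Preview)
Your proof is correct, and it takes a genuinely different route from the paper's. The paper argues constructively: it fixes the bridge path $\pi$, notes that $C \cap \pi = \emptyset$ (since $\pi$ touches the leftmost column and $C$ does not), invokes a planar separation claim that $\pi$ must ``go around (and below) $C$,'' then picks a bottommost point $\vec{p}=(x,b_C)$ of $C$ and lets $\vec{q}$ be the topmost $\pi$-point in column $x$ strictly below $b_C$; non-adjacency of $C$ and $\pi$ forces a gap above $\vec{q}$, making $\vec{q}$ the desired witness. Your argument instead runs by contradiction and column-filling: assuming no witness exists, you show every column meets the bridge's component $D$ and then push $D$ upward cell-by-cell to the top row, where it collides with $C$. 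What you gain is that you never need the informal planar-topological step (``$\pi$ must go below $C$''), and you never need to argue that the specific column of a bottommost $C$-point actually contains a $\pi$-point below $b_C$; the only connectivity facts you use are that $D$ and $C$ are distinct components and that the bridge path visits every column. What the paper's approach buys is an explicit, locatable witness $\vec{q}$ rather than mere existence, but for the lemma as stated your argument is at least as clean and arguably more robust.
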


\begin{proof}
Let $h$ be a connected h-bridge in $G$ and let $\pi$ be a path in $G$
connecting the two points in $h$. Since $\pi$ connects the leftmost
and rightmost columns of $G$ and $C$ does not contain any point in the
leftmost column of $G$, $C \cap \pi = \emptyset$. Since $C$ is a
connected component that extends vertically from row $t_C = t_G$ down
to row $b_C$ and $C \cap \pi = \emptyset$, $\pi$ must go around (and
below) $C$. Furthermore, no point in $C$ is adjacent to any point in
$\pi$. Let $\vec{p}$ denote a bottommost point $(x,b_C)$ in $C$, with
$l_G < x \leq r_G$. Let $\vec{q}$ denote the topmost point $(x,y)$ in
$\pi \cap (\{x\}\times \mathbb{N}_{b_C})$. Note that $\vec{p}$ and
$\vec{q}$ are in the same column, $\vec{p}$ is above $\vec{q}$, and
$G$ contains no points between $\vec{p}$ and $\vec{q}$ in that
column. Since $\vec{p}$ is not adjacent to $\vec{q}$, $y <
b_C-1$. Therefore, $N(\vec{q}) \notin G$. Since $\vec{q} \in \pi
\subset G$ and $\vec{q} \not\in C$, $\vec{q} \in G\backslash
C$. Furthermore, since $\vec{q}=(x,y)$ and $y < b_C -1 < b_C \leq t_C
= t_G$, $\vec{q} \not \in \mathbb{N} \times \{t_G\}$. In conclusion,
$\vec{q}$ exists and is a candidate for the role of $x_N$.  \qed
\end{proof}

\begin{lemma}
\label{lem:north-east-free-point}
Let $G$ be any finite subset of $\N^2$ that has at least one connected
v-bridge. If $G$ contains a connected component $C \subset G$ such
that $C \cap (\{r_G\} \times \N ) \ne \emptyset$, $C \cap (\N \times
\{t_G\}) \ne \emptyset$ and $C \cap (\N \times \{b_G\}) = \emptyset$,
then there exists a point $\vec{x}_{NE} \in G \backslash C$ such that
$E\left(\vec{x}_{NE}\right) \not \in G$, $\vec{x}_{NE} \in \N
\times \{t_G\}$ and $\vec{x}_{NE} \not \in \{r_G\} \times \N$.
\end{lemma}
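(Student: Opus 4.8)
The plan is to follow the template of the preceding lemma (Lemma~\ref{lem:north-free-point}), but to keep track of the top row and the right column so that the cell we extract lands in the top row yet strictly west of the right column. First I would use the hypothesis that $G$ has a connected v-bridge to fix a column $x_\pi$ together with a simple path $\pi \subseteq G$ joining $(x_\pi, b_G)$ to $(x_\pi, t_G)$. Since $(x_\pi, b_G)$ lies in the bottom row and $C \cap (\N \times \{b_G\}) = \emptyset$, the path $\pi$ contains a cell outside $C$, so $\pi \not\subseteq C$; because $\pi$ is a connected subset of $G$ and $C$ is a connected component of $G$, any common cell would force $\pi \subseteq C$, and hence $\pi \cap C = \emptyset$. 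The same maximality remark shows that no cell of $C$ is adjacent to a cell of $\pi$ (an edge of $\fullgridgraph_G$ between them would place that cell of $\pi$ into $C$). In particular the top endpoint satisfies $(x_\pi, t_G) \in G \backslash C$; write $D$ for the connected component of $G$ containing $\pi$, so $D \neq C$.

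Next I would produce the candidate cell by walking east along the top row: let $x^\ast$ be the largest $x \geq x_\pi$ with $[x_\pi, x] \times \{t_G\} \subseteq G$. The cells $(x_\pi, t_G), \ldots, (x^\ast, t_G)$ are pairwise adjacent and all lie in $G$, hence all lie in $D$, so $(x^\ast, t_G) \in D \subseteq G \backslash C$ and $(x^\ast, t_G) \in \N \times \{t_G\}$; and maximality of $x^\ast$ gives $(x^\ast + 1, t_G) \notin G$, i.e.\ $E((x^\ast, t_G)) \notin G$. Thus $(x^\ast, t_G)$ meets every condition demanded of $\vec{x}_{NE}$ except possibly $(x^\ast, t_G) \notin \{r_G\} \times \N$, so the whole argument reduces to showing $x^\ast < r_G$.

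The hard part is ruling out $x^\ast = r_G$, and this is exactly where the remaining hypothesis (that $C$ meets the right column) and a planar-separation argument enter. I would argue by contradiction: if $x^\ast = r_G$ then $[x_\pi, r_G] \times \{t_G\} \subseteq D$, so every top-row cell of $C$ must sit in a column strictly less than $x_\pi$ (a top-row cell of $C$ at a column in $[x_\pi, r_G]$ would lie in $D \neq C$), while $C \cap (\{r_G\} \times \N) \neq \emptyset$ supplies a cell $(r_G, y_C) \in C$ on the right edge with $y_C < t_G$ (since $(r_G, t_G) \in D$). This is precisely the ``going around'' picture of Lemma~\ref{lem:north-free-point}: the simple path $\pi$ climbs from the bottom row to the top row and thereby separates the bounding box into a region abutting the right edge and a region west of $\pi$'s top endpoint; the right-edge cell forces $C$ onto the east side of $\pi$, whereas a top-row cell of $C$ in a column $< x_\pi$ lies on the west side, and a connected $C$ that is disjoint from, and non-adjacent to, $\pi$ cannot occupy both sides. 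That contradiction yields $x^\ast < r_G$, so $\vec{x}_{NE} = (x^\ast, t_G)$ is as required. I expect the genuine obstacle to be making this discrete ``$\pi$ separates the box'' step rigorous (a Jordan-curve argument for grid paths), which is exactly the crux already relied upon in Lemma~\ref{lem:north-free-point}; everything else is bookkeeping about adjacency and maximal runs along the top row.
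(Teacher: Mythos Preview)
Your argument is essentially correct, but it is organized rather differently from the paper's and ends up doing more work than necessary.

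The paper never uses $\pi$ as the separator. Instead it reverses the roles: since $C$ itself contains a path from the top row to the rightmost column, \emph{$C$} is what fences off the north-east corner of the bounding box. Because $\pi$ reaches the bottom row (which $C$ avoids), $\pi$ cannot enter that fenced-off corner, so every top-row cell of $\pi$ must lie strictly to the west of every top-row cell of $C$; non-adjacency of $\pi$ and $C$ then gives $r_{\pi_t} < l_{C_t}-1$. The candidate is simply $\vec p=(r_{\pi_t},t_G)$, the rightmost top-row cell of $\pi$, and all three required properties are read off directly from this inequality. There is no case split on whether the run reaches $r_G$, and no need to invoke a second separation argument after the candidate is chosen.

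What your route buys is that $E((x^\ast,t_G))\notin G$ is immediate from the maximal-run construction, whereas in the paper's argument the step ``$E(\vec p)\notin G$'' relies on the assertion that $G$ has \emph{no} top-row points strictly between $\vec p$ and $\vec q$, which is not obviously true of points outside both $\pi$ and $C$. Conversely, what the paper's route buys is economy: one separation step (with $C$ as the barrier) simultaneously yields ``not in the rightmost column'' and sets up east-freeness, whereas you first build the candidate, then mount a separate Jordan-curve contradiction using $\pi$ as the barrier to rule out $x^\ast=r_G$. Your planar-separation step is at the same level of rigor as the paper's ``$C$ cuts off the north-east corner'' step; neither proof spells out the discrete Jordan argument, and you are right that this is the only genuinely non-formal ingredient in either version.
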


\begin{proof}
Let $v$ be a connected v-bridge in $G$ and let $\pi$ be a path in $G$
connecting the two points in $v$. Let $\pi_{t}$ denote $\pi \cap (\N
\times \{t_G\})$. Since $\pi_t$ cannot be empty, the point $\vec{p} =
(r_{\pi_t},t_G)$ must exist. Similarly, let $C_{t}$ denote $C \cap (\N
\times \{t_G\})$. Since $C_t$ cannot be empty, the point $\vec{q} =
(l_{C_t},t_G)$ must exist.

Since $\pi$ connects the topmost and bottommost rows of $G$ and $C$
does not contain any point in the bottommost row of $G$, $C \cap \pi =
\emptyset$. This, together with the fact that $C$ contains a path from
the topmost row to the rightmost column of $G$ (that is, $C$ ``cuts off'' the
subset of $G$ that lies to the north-east of $C$ from the rest of
$G$), implies that all of the points in $\pi_{t}$ must appear to the
left of all the points in $C_{t}$, namely $r_{\pi_{t}} <
l_{C_{t}}$. In fact, since $\pi$ and $C$ cannot be connected,
$\vec{p}$ and $\vec{q}$ cannot be adjacent, i.e., $r_{\pi_{t}} <
l_{C_{t}}-1$. Therefore, $\vec{p}$ and $\vec{q}$ are both in the
topmost row of $G$ (thus $\vec{p} \in \N \times \{t_G\}$), $\vec{p}$ is
to the left of $\vec{q}$ (thus $\vec{p} \not \in \{r_G\} \times
\mathbb{N}$), and $G$ does not contain any points in the topmost row
between the non-adjacent points $\vec{p}$ and $\vec{q}$ (thus
$E(\vec{p}) \not\in G$). In conclusion, $\vec{p} \in \pi \subseteq
G \backslash C$ is a candidate for the role of $\vec{x}_{NE}$.\qed
\end{proof}

\begin{lemma}
\label{lem:east-free-point}
Let $G$ be any finite subset of $\N^2$ that has at least one
connected v-bridge. If $G$ contains a connected component $C \subset
G$ such that $C \cap (\{r_G\} \times \N ) \ne \emptyset$, and $C \cap
(\N \times \{b_G\}) = \emptyset$, then there exists a point
$\vec{x}_{E} \in G \backslash C$ such that $E\left(\vec{x}_{E}\right)
\not \in G$ and $\vec{x}_{E} \not \in \{r_G\} \times \mathbb{N}$.
\end{lemma}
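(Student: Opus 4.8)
The plan is to reuse the skeleton of the two preceding lemmas. First I would fix a connected v-bridge of $G$ together with a simple path $\pi \subseteq G$ joining its two endpoints $(x_v, b_G)$ and $(x_v, t_G)$. Exactly as before, since $\pi$ meets the bottom row while $C$ does not, and $C$ is a maximal connected component, $\pi$ and $C$ must lie in different components, so $C \cap \pi = \emptyset$. The goal is then to exhibit a point of $\pi$'s component that is east-free and lies strictly to the left of the rightmost column.

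To locate such a point I would descend from $C$'s contact with the rightmost column. Let $\vec{c}_0 = (r_G, y_0)$ be the \emph{bottommost} point of $C$ in the rightmost column (it exists because $C \cap (\{r_G\} \times \N) \ne \emptyset$), and note $y_0 > b_G$ because $C$ misses the bottom row. The cell immediately below it, $(r_G, y_0 - 1)$, cannot lie in $G$: otherwise it would be adjacent to $\vec{c}_0 \in C$ and hence belong to $C$, contradicting the minimality of $y_0$. Since $b_G \le y_0 - 1 \le t_G$, and the $y$-coordinates along the grid path $\pi$ change by at most one per step while running from $b_G$ up to $t_G$, a discrete intermediate-value argument shows that $\pi$ must visit row $y_0 - 1$; fix any such point $\vec{w} \in \pi$.

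Finally I would slide east to the boundary of $G$ within that row. Take the maximal horizontal run of $G$-points in row $y_0 - 1$ containing $\vec{w}$, and let $\vec{x}_E$ be its right endpoint. Being horizontally connected to $\vec{w} \in \pi$, the entire run lies in $\pi$'s component, so $\vec{x}_E \in G \setminus C$; by maximality of the run, $E(\vec{x}_E) \notin G$; and since $(r_G, y_0 - 1) \notin G$ forces the run to stop before reaching the rightmost column, $\vec{x}_E \notin \{r_G\} \times \N$. Thus $\vec{x}_E$ has all three required properties.

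The step I expect to require the most care is this last one, namely committing to the \emph{run's endpoint} rather than to $\vec{w}$ itself. The east neighbor of a path point may perfectly well still lie in $G$ (off $\pi$), so one cannot simply declare $\vec{w}$ east-free; it is the passage to the maximal run, together with the observation that horizontal adjacency keeps the whole run inside $\pi$'s component and therefore away from $C$, that makes the three conclusions hold simultaneously. The only other point needing a word of justification is that $\pi$ genuinely reaches row $y_0 - 1$, which is where the connected v-bridge hypothesis is used.
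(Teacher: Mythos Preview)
Your proof is correct. Both your argument and the paper's begin identically: fix a v-bridge path $\pi$ and observe $C\cap\pi=\emptyset$ because $C$ misses the bottom row while $\pi$ does not. From there the two diverge. The paper selects a \emph{leftmost} point $\vec{p}=(l_C,y)$ of $C$, then takes $\vec{q}$ to be the rightmost point of $\pi$ in that same row $y$ lying strictly left of column $l_C$; the non-adjacency of $\pi$ and $C$ forces a gap to the east of $\vec{q}$, and $\vec{q}$ is automatically far left of column $r_G$. You instead select the \emph{bottommost} point $(r_G,y_0)$ of $C$ in the rightmost column, deduce $(r_G,y_0-1)\notin G$, and then work entirely in row $y_0-1$: any maximal horizontal $G$-run there containing a $\pi$-point must terminate before column $r_G$, so its right endpoint serves as $\vec{x}_E$. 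Your construction trades the paper's ``gap between $\pi$ and $C$'' for a ``gap below $C$ in the rightmost column'', and in doing so makes each verification (east-freeness, not-in-rightmost-column, membership in $G\setminus C$) follow from a single explicit fact rather than from the somewhat informal ``$\pi$ must go around and to the left of $C$'' picture; the cost is one extra step (extending from $\vec{w}$ to the run endpoint) that you correctly flag as the delicate point.
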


\begin{proof}
Let $v$ be a connected v-bridge in $G$ and let $\pi$ be a path in $G$
connecting the two points in $v$. Since $\pi$ connects the topmost and
bottommost rows of $G$ and $C$ does not contain any point in the
bottommost row of $G$, $C \cap \pi = \emptyset$. Since $C$ is a
connected component that extends horizontally from column $l_C$ to
column $r_C=r_G$ and $C \cap \pi = \emptyset$, $\pi$ must go around
(and to the left of) $C$. Furthermore, no point in $C$ is adjacent to
any point in $\pi$.  Let $\vec{p}$ denote a leftmost point $(l_C,y)$
in $C$, with $b_G < y \leq t_G$. Let $\vec{q}$ denote the rightmost
point $(x,y)$ in $\pi \cap (\N_{l_C}\times \{y\})$. Note that
$\vec{p}$ and $\vec{q}$ are in the same row, $\vec{q}$ is to the left of
$\vec{p}$, and $G$ contains no points between $\vec{p}$ and $\vec{q}$
in that row. Since $\vec{p}$ is not adjacent to $\vec{q}$, $x <
l_C-1$. Therefore, $E(\vec{q}) \notin G$. Since $\vec{q} \in \pi
\subset G$ and $\vec{q} \not\in C$, $\vec{q} \in G\backslash
C$. Furthermore, since $\vec{q}=(x,y)$ and $x < l_C -1 < l_C \leq r_C
= r_G$, $\vec{q} \not \in \{r_G\}\times \N$. In conclusion,
$\vec{q}$ exists and is a candidate for the role of $x_E$.  \qed
\end{proof}

\begin{lemma}
\label{lem:connected-implies-v-and-h-bridges}
Let $\mathbf{X} = \bigcup_{i=1}^{\infty}{X_i}$ be a $g$-discrete self-similar fractal with generator
$G$. If $\mathbf{X}$ is a tree, then $G$ must have at least one
connected h-bridge and at least one connected v-bridge.
\end{lemma}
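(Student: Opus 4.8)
The plan is to reduce the lemma to two simpler facts about the generator and then combine them. Observe first that, because $G$ has at least one point in every row and column and contains $(0,0)$, its bounding box is exactly $\mathbb{N}_g \times \mathbb{N}_g$, so $l_G = b_G = 0$ and $r_G = t_G = g-1$; in particular every h-bridge of $G$ has the form $hb_G(y) = \{(0,y),(g-1,y)\}$ and every v-bridge the form $vb_G(x) = \{(x,0),(x,g-1)\}$. The key reduction is this: if $G$ is \emph{connected} and possesses an h-bridge (resp.\ a v-bridge), then that bridge is automatically a \emph{connected} bridge, since in a connected $G$ the two bridge points are joined by a path in $G$. Hence it suffices to prove (a) that $G$ has at least one h-bridge and at least one v-bridge, and (b) that $G$ is connected. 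Throughout I will use the self-similar description $\mathbf{X} = G + g\mathbf{X} = \bigcup_{\vec n \in \mathbf{X}}(G + g\vec n)$, under which $\mathbf{X}$ is the union of copies of $G$, one placed in each block indexed by a point of $\mathbf{X}$, together with the fact that $G = \mathbf{X} \cap (\mathbb{N}_g \times \mathbb{N}_g)$ is an induced subgraph of $\mathbf{X}$.

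Part (a) is the easy step. Since $G$ has a point in every column, a routine induction on the stages $X_i$ shows that $\mathbf{X}$ occupies every column of $\mathbb{N}$; in particular it has points in column $g-1$ and in column $g$. As $\mathbf{X}$ is connected, any path joining a point of the former to a point of the latter must cross the vertical line $x = g - \tfrac{1}{2}$, so $\mathbf{X}$ contains an edge between some $(g-1,y)$ and $(g,y)$. The left endpoint lies in $\mathbf{X} \cap (\mathbb{N}_g\times\mathbb{N}_g) = G$ and so certifies $(g-1,y)\in G$, while the right endpoint lies in the block indexed by $(1,0)$ at local coordinate $(0,y)$ and so certifies $(0,y)\in G$; together these give the h-bridge $hb_G(y)$. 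The mirror-image argument across the line $y = g-\tfrac{1}{2}$ produces a v-bridge.

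Part (b) --- that $G$ is connected --- is the heart of the proof, and I would establish it by proving the contrapositive: if $G$ is disconnected, then $\mathbf{X}$ is disconnected, hence not a tree. The mechanism is that, in the block decomposition above, two distinct components of a single copy of $G$ are never joined inside their own block, so a connected component of $\mathbf{X}$ can propagate from block to block only through bridge points lying on the four sides of $\mathbb{N}_g\times\mathbb{N}_g$. A component of $G$ touching none of these sides is therefore immediately stranded as a finite component of $\mathbf{X}$; and in general, since the set of active blocks is again the ``cornered'' set $\mathbf{X}\subseteq\mathbb{N}^2$ (bounded below and to the left), following the only admissible escape directions of a component other than the one containing $(0,0)$ must eventually reach a translated copy all of whose linking neighbor-blocks are inactive, producing a finite --- hence proper --- connected component of the infinite assembly $\mathbf{X}$, the desired contradiction.

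I expect Part (b) to be the main obstacle: Part (a) and the reduction are essentially bookkeeping, but formalizing the stranding argument requires care, because a component of $G$ may touch several sides and so have several potential escape directions, and one must argue by a descent on the block structure that these escapes cannot all succeed simultaneously within the quarter-plane $\mathbb{N}^2$. Once $G$ is known to be connected, the lemma follows immediately: by Part (a) the generator has an h-bridge and a v-bridge, and by connectedness each of these is a connected bridge, which is exactly the conclusion.
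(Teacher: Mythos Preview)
Your decomposition differs from the paper's proof. The paper argues directly by contradiction: assuming $G$ lacks a connected h-bridge, it shows (splitting on whether the leftmost column of $G$ is full) that $\mathbf{X}$ is either disconnected or contains a cycle, and then appeals to symmetry for v-bridges. You instead reduce to (a) the mere existence of an h- and a v-bridge and (b) the connectedness of $G$, which together trivially give connected bridges. This is a reasonable alternative organization, but note that (b) is strictly stronger than the lemma's conclusion, and in the paper the connectedness of $G$ is established \emph{later}, as part of Theorem~\ref{thm:tree-fractal}, \emph{using} the present lemma together with Lemmas~\ref{lem:north-free-point}--\ref{lem:east-free-point}. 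So you are proposing to collapse Lemma~\ref{lem:connected-implies-v-and-h-bridges} and a substantial piece of Theorem~\ref{thm:tree-fractal} into one argument.

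Part (a) has a gap as written. From the connectedness of $\mathbf{X}$ you correctly obtain an edge between some $(g-1,y)$ and $(g,y)$ in $\mathbf{X}$, but nothing forces $y<g$; your claims that the left endpoint lies in $\mathbf{X}\cap(\mathbb{N}_g\times\mathbb{N}_g)$ and that the right endpoint lies ``in the block indexed by $(1,0)$'' are unjustified. The fix is immediate via the decomposition you already invoked: writing $(g-1,y)=(g-1,b)+g(0,n)$ and $(g,y)=(0,b)+g(1,n)$ with $b=y\bmod g$ and $n=\lfloor y/g\rfloor$ forces $(g-1,b)\in G$ and $(0,b)\in G$, yielding the h-bridge $hb_G(b)$.

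Part (b) is the real concern. Your stranding sketch is essentially the case analysis the paper carries out in the proof of Theorem~\ref{thm:tree-fractal}, but there every case leans on the existence of a connected h-bridge and a connected v-bridge (this is the hypothesis of each of Lemmas~\ref{lem:north-free-point}--\ref{lem:east-free-point}), which is exactly what you have not yet established: Part (a) gives you bridges, not \emph{connected} bridges, so those lemmas are unavailable to you. You would have to redo the entire ``which sides does the stray component touch'' analysis without that hypothesis, and your sketch does not indicate how. A smaller point: your contrapositive ``$G$ disconnected $\Rightarrow$ $\mathbf{X}$ disconnected'' is stronger than needed; the target is only ``$\mathbf{X}$ is not a tree'', so in some sub-cases you may find it easier to exhibit a cycle in $\mathbf{X}$ than a disconnection.
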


\begin{proof}
Assume, for the sake of contradiction, that either $G$ does not have a connected h-bridge or $G$ does not have a connected v-bridge.

First, suppose that $G$ does not have a connected h-bridge.

If $\left| G \cap \left(\{0\} \times \N\right)\right| =
g$, then, for every point $(1,y) \in G$, $N(1,y) \not
\in G$, $S(1,y) \not \in G$ and $(1,g-1) \in G \Rightarrow (1,0) \not
\in G$. If $(1,g-1) \in G$, then $(1,0) \not \in G$, because
otherwise, if $(1,0) \in G$ and $(1,g-1) \in G$, then
$\{(0,g-1),N(0,g-1),N(1,g-1),(1,g-1)\} \subset \mathbf{X}$ would
constitute a cycle in $\mathbf{X}$. For all $(1,y) \in G$, if $N(1,y)
\in G$ or $S(1,y) \in G$ (or both), then $G \subset \mathbf{X}$ would
contain a cycle. To see this, let $(1,y) \in G$ such that $N(1,y) \in
G$. Then the points $\{(1,y), N(1,y), W(1,y), W(N(1,y)) \} \subset G$
would constitute a cycle in $G \subset \mathbf{X}$, since we are
assuming that $\left| G \cap \left(\{0\} \times \N\right)\right| =
g$.
% Therefore, let $(1,y) \in G$ such that $N(1,y) \not \in G$ and
% $S(1,y) \not \in G$.

We will now prove that there is no path in $\mathbf{X}$ from the
origin to any point $(x,y) \in \mathbf{X}$ with
$x\geq 2g$. If there were such a path $\pi$, it would include
at least one pair of consecutive points $(2g-1,y')$ and $(2g,y')$. Let
us consider the first such pair in $\pi$ and let $\lfloor \frac{y'}{g}
\rfloor = a$. Then $(2g-1,y') \in G+(g,ag)$. Since this copy of $G$
belongs to the second column of copies of $G$ in $F_2$, we
can use the argument given in the previous paragraph to infer that
$\mathbf{X} \cap (G+(g,(a-1)g)=\emptyset$ and $\mathbf{X} \cap
(G+(g,(a+1)g)=\emptyset$. Therefore, $\pi$ must contain a sub-path
$\pi'$ from the leftmost column of $G+(g,ag)$ to $(2g-1,y')$, that is,
a path from $(g,y'')$ to $(2g-1,y')$, for $ag \leq y'' <(a+1)g$.  But
since the leftmost column of $G+(g,ag)$ contains $g$ points, there
must be a (vertical) path from $(g,y')$ to $(g,y'')$ fully contained
in the leftmost column of $G+(g,ag)$. Therefore, by concatenation of
this path to $\pi'$, $G+(g,ag)$ must contain a path from $(g,y')$ to
$(2g-1,y')$. But this path would be a connected h-bridge of
$G +(g,ag)$, which would imply that $G$ contains a connected
h-bridge. So we can conclude that there is no path in
$\mathbf{X}$ from the origin to any point east of the line
$x=2g-1$. Since $\mathbf{X}$ contains an infinite number of points in
this region of $\Z^2$, $\mathbf{X}$ cannot be connected, which is
impossible since $\mathbf{X}$ is a tree. This contradiction
implies that $G$ must contain at least one connected h-bridge.

If $\left| G \cap \left(\{0\} \times \N\right)\right| < g$ and $G$
does not have a connected h-bridge, then one can show via a case
analysis that either $\mathbf{X}$ is disconnected or contains a
cycle. However, both of these scenarios are impossible since
$\mathbf{X}$ is a tree.

Second, we can use a symmetric reasoning to prove that $G$ must
contain at least one connected v-bridge.\qed
\end{proof}

%\begin{lemma}
%\label{lem:simple-path-from-origin-to-bridge}
%Let $\mathbf{X}$ be a $g$-discrete self-similar fractal with generator $G$. If $\mathbf{X}$ is a tree, then there is a simple path in $G$ from the origin to some connected bridge.
%\end{lemma}
%
%\begin{proof}
%By Lemma~\ref{lem:connected-implies-v-and-h-bridges}, we know that $G$ has a connected h-bridge and a connected v-bridge. To see that there must be a simple path in $G$ from the origin to some connected bridge, assume otherwise. That is, assume that there is no simple path in $G$ from the origin to any connected bridge. Then there would be no simple path in $G$ from the origin to any point in $\left(\{g-1\} \times \mathbb{N}\right) \cup \left( \mathbb{N} \times \{g-1\}\right)$ (if there were such a path, it would have to go through either a connected h-bridge or a connected v-bridge). This implies that there is some subset $\{(0,0)\} \subseteq C \subset G$ that is a finite connected component in $\mathbf{X}$, which is a contradiction because such a connected component is impossible to have in $\mathbf{X}$, which is an infinite tree. Therefore, there is a simple path in $G$ from the origin to some connected bridge.
%\end{proof}

\begin{notation}
Let $c, s\in \mathbb{Z}^+$ and $1<g \in \mathbb{N}$. Let $e, g \in
\N_g$. We use $S_s^c(e,f)$ to denote
$\{0,1,\ldots,cg^{s-1}-1\}^2+ cg^{s-1}(e,f)$.
\end{notation}

%\begin{notation}
%Let $c, s\in \mathbb{Z}^+$ and $1<g \in \mathbb{N}$. Let $e, g, p, q \in
%\N_g$. We use $S_s^c(e,f,p,q)$ to denote
%$\{0,1,\ldots,cg^{s-2}-1\}^2+ cg^{s-1}(e,f) + cg^{s-2}(p,q)$.
%\end{notation}

\begin{notation}
Let $1<g \in \mathbb{N}$. Let $\mathbf{X} = \bigcup_{i=1}^{\infty}{X_i}$ be a $g$-discrete
self-similar fractal. If $s \in \mathbb{Z}^+$, we use
$P_X(s)$ to denote the property: `` $X_s$ is a tree
and $nhb_{X_s} = nvb_{X_s} = 1$''.

\end{notation}

\begin{lemma}\label{lem:inductive-step}
Let $1<g \in \mathbb{N}$. If $\mathbf{X}$ is a $g$-discrete self-similar fractal, then
$P_{\mathbf{X}}(i) \rightarrow P_{\mathbf{X}}(i+1)$ for  $i \in
\mathbb{Z}^+$.
\end{lemma}

\begin{proof}
Let $\mathbf{X}$ be any $g$-discrete self-similar fractal. Let $i \in
\mathbb{Z}^+$. We will abbreviate $X_{i} \cap S^1_{i}(x,y)$
and $X_{i+1} \cap S^1_{i+1}(x,y)$ to $U(x,y)$ and $V(x,y)$,
respectively, where $x,y \in \mathbb{N}_g$. The definition of
$\mathbf{X}$ implies that the following proposition, which we refer to
as $(*)$, is true: ``Every non-empty $V$ subset of $X_{i+1}$
is a translated copy of $X_{i}$''.

Assume that $P_{\mathbf{X}}(i)$ holds.

First, we prove that $X_{i+1}$ is connected. Pick any two
distinct points $p$ and $q$ in $X_{i+1}$. If $p$ and $q$
belong to the same $V$ subset of $X_{i+1}$, then there is a
simple path from $p$ to $q$ (because of $(*)$ and the fact that
$X_i$ is connected, by $P_{\mathbf{X}}(i)$). If $p$ and $q$
belong to two distinct $V$ subsets of $X_{i+1}$, say,
$V(x_0,y_0)$ and $V(x_k,y_k)$, then consider the corresponding two $U$
subsets $U(x_0,y_0)$ and $U(x_k,y_k)$ of $X_i$, which cannot
be empty. $P_{\mathbf{X}}(i)$ implies that there exists a path from
any point in $U(x_0,y_0)$ to any point in $U(x_k,y_k)$. Assume that
this path goes through the following sequence $P_i$ of $U$ subsets of
$X_i$: $U(x_0,y_0), U(x_1,y_1), \ldots, U(x_{k-1},y_{k-1}),
U(x_k,y_k)$. $P_{\mathbf{X}}(i)$ and $(*)$
%the definition of $\mathbf{X}$
together imply that each one of the corresponding $V$ subsets of
$X_{i+1}$, i.e., $V(x_0,y_0)$, \ldots, $V(x_k,y_k)$, is
connected and contains a connected h-bridge and a connected
v-bridge. Furthermore, since any pair of consecutive $U$ subsets in
$P_i$ are adjacent in $X_i$, the same is true of the $V$
subsets of $X_{i+1}$ in the sequence $P_{i+1}$: $V(x_0,y_0),
V(x_1,y_1), \ldots, V(x_{k-1},y_{k-1}), V(x_k,y_k)$. Since, for $i \in
\mathbb{N}_{k}$, $V(x_i,y_i)$ is adjacent to $V(x_{i+1},y_{i+1})$ and
each one of these subsets is connected and has at least one horizontal
and one vertical bridge, there must be at least one path from any
point in $V(x_0,y_0)$ to any point in $V(x_k,y_k)$ and this path, say
$P$, is fully contained in $\cup_{i=0}^kV(x_i,y_i)$. $P$ is
simple because it is obtained by concatenating disjoint simple paths,
each of which belongs to a different $V$ subset. Therefore, there
exists a simple path between $p \in V(x_0,y_0)$ and $q \in
V(x_k,y_k)$.

Second, we prove that $nhb_{X_{i+1}} = nvb_{X_{i+1}}
= 1$.  Since the reasoning is similar for both horizontal and vertical
bridges, we only deal with $nhb_{X_{i+1}}$ here. By
$P_{\mathbf{X}}(i)$, $X_i$ contains exactly one horizontal
bridge. Therefore, there are exactly two subsets of $X_i$ of
the form $U(0,y)$ and $U(g-1,y)$, for some $y$ in $\mathbb{N}_g$, such
that there exist exactly two points $p=(x_p,y_p)$ in $U(0,y)$ and
$q=(x_q,y_q)$ in $U(g-1,y)$ with $y_p=y_q$. Now consider $V(0,y)$ and
$V(g-1,y)$. Since each one of these subsets of $X_{i+1}$ is a
translated copy of $X_i$, the west-most column of $V(0,y)$ is
identical to the west-most column of $X_i$ and the east-most
column of $V(g-1,y)$ is identical to the east-most column of
$X_i$. Therefore, the number of horizontal bridges in
$X_{i+1}$ that belong to $V(0,y) \cup V(g-1,y)$ is equal to
$nhb_{X_i}=1$. In other words,
$nhb_{X_{i+1}}\geq1$. Since both $X_{i}$ and
$X_{i+1}$ are built out of copies of their preceding stage
according to the same pattern (namely the generator of $\mathbf{X}$)
and we argued above that the only horizontal bridges in $X_i$
belong to the subsets $U(0,y)$ and $U(g-1,y)$, the horizontal bridges
in $X_{i+1}$ can only belong to the subsets $V(0,y)$ and
$V(g-1,y)$. In other words, $nhb_{X_{i+1}}\leq 1$. Finally,
$nhb_{X_{i+1}}=1$.

Third, we prove that $X_{i+1}$ is acyclic. For the sake of
obtaining a contradiction, assume that there exists a simple cycle $C$
in $X_{i+1}$. Let the sequence $P_{i+1}$ of adjacent $V$
subsets that $C$ traverses be $V(x_0,y_0)$, \ldots, $V(x_k,y_k)$. If
$P_{i+1}$ has length one, then $C$ is contained is a single
(translated) copy of $X_{i}$ (by $(*)$), which contradicts
the fact that $X_{i}$ is acyclic (by
$P_{\mathbf{X}}(i)$). Otherwise, $C$ traverses all of the $V$ subsets
in $P_{i+1}$, whose length is at least two.  Following the same
reasoning as above, there must exist a corresponding sequence $P_i$,
namely $U(x_0,y_0)$, \ldots, $U(x_k,y_k)$, of $U$ subsets in
$X_i$.  Since each subset in this sequence is connected,
contains one horizontal bridge and one vertical bridge (by
$P_{\mathbf{X}}(i)$), and is adjacent to its neighbors in the
sequence, the union of these subsets forms a connected component that
must contain at least one simple cycle, which contradicts the fact
that $X_i$ is a tree (by $P_{\mathbf{X}}(i)$). In all cases,
we reached a contradiction. Therefore, $X_{i+1}$ is acyclic.

Finally, since $X_{i+1}$ is a tree and
$nhb_{X_{i+1}}=nvb_{X_{i+1}}=1$, $P_{\mathbf{X}}(i+1)$ holds.
\qed
\end{proof}

}
\fi

The following theorem is a new characterization of tree fractals in
terms of simple connectivity properties of their generator.

\both{
\begin{theorem}\label{thm:tree-fractal}
$\displaystyle\mathbf{T} = \bigcup_{i=1}^{\infty}{T_i}$ is a $g$-discrete self-similar tree fractal, for some $g > 1$, with generator $G$ if and only if \\
\hspace*{1cm}a. $G$ is a tree, and\\
\hspace*{1cm}b. $nhb_G = nvb_G =1$
\end{theorem}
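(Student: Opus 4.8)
The plan is to prove the two directions separately, using the stage property $P_{\mathbf{T}}(s)$ (``$T_s$ is a tree and $nhb_{T_s}=nvb_{T_s}=1$'') as the link between the finite generator and the infinite fractal.

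For the ($\Leftarrow$) direction I would assume (a) and (b). Since $T_1=G$, these hypotheses are precisely the assertion that $P_{\mathbf{T}}(1)$ holds, so the inductive-step lemma gives $P_{\mathbf{T}}(i)$ for every $i\ge 1$ by a routine induction; in particular each stage $T_i$ is a tree. Because $(0,0)\in G$, the recurrence $T_{i+1}=T_i+g^iG$ forces $T_i\subseteq T_{i+1}$, so the stages form a nested increasing chain whose union is $\mathbf{T}$. I would finish with two short topological remarks: $\mathbf{T}$ is connected because any two of its points already lie in a common connected stage, and $\mathbf{T}$ is acyclic because a simple cycle is finite, hence contained in some $T_i$, contradicting that $T_i$ is a tree. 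Thus $\mathbf{T}$ is a tree fractal.

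For the ($\Rightarrow$) direction I would assume $\mathbf{T}$ is a tree. The first observation is that $G=\mathbf{T}\cap \N_g^2$: every point of $\mathbf{T}$ has the form $\sum_{j\ge 0} g^j\vec{a}_j$ with each $\vec{a}_j\in G$ and all but finitely many equal to the origin, and since all terms are nonnegative such a point lies in $\N_g^2$ only when $\vec{a}_j=\vec{0}$ for every $j\ge 1$. Hence the full grid graph of $G$ is the subgraph of the (tree) grid graph of $\mathbf{T}$ induced on $G$, so $G$ is acyclic. That $G$ has at least one connected h-bridge and at least one connected v-bridge, giving $nhb_G\ge 1$ and $nvb_G\ge 1$, is exactly the earlier lemma stating that the generator of a tree fractal has connected bridges of both kinds.

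It remains to prove that $G$ is connected and that $nhb_G\le 1$ and $nvb_G\le 1$, which is where the real work lies. The guiding fact is that in $\mathbf{T}$ two horizontally (resp.\ vertically) adjacent copies of a stage are glued only along bridge rows (resp.\ columns): a point $(g-1,y)$ of the origin copy abuts the east copy exactly when $y$ indexes an h-bridge, and symmetrically in the other directions. I expect connectivity of $G$ to be the main obstacle: if $G$ split into several components, then a component avoiding all gluing boundaries would remain permanently separated from the rest of $\mathbf{T}$, contradicting its connectivity, and the remaining cases, where a stray component does meet a boundary, are handled by the free-point lemmas, which locate a boundary point with a free neighbor and thereby force either a disconnection or a cycle in some stage. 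Once $G$ is known to be connected, the upper bounds follow by a cycle argument: two distinct h-bridge rows $y_1\ne y_2$ would yield, inside $T_2$, two vertex-disjoint paths joining the origin copy to its east copy (one path through each copy of the connected $G$, closed by the two bridge edges), producing a simple cycle and contradicting that $\mathbf{T}$ is acyclic; the bound on $nvb_G$ is symmetric. Combining acyclicity, connectivity, and $nhb_G=nvb_G=1$ establishes (a) and (b).
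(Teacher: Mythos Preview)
Your proposal is correct and follows essentially the same route as the paper. The ($\Leftarrow$) direction is identical: invoke the inductive-step lemma to get $P_{\mathbf{T}}(i)$ for all $i$, then lift connectivity and acyclicity to the union via the nested-chain and finite-cycle arguments. For ($\Rightarrow$), the paper also gets acyclicity of $G$ from $G\subset\mathbf{T}$, invokes the connected-bridge lemma for $nhb_G,nvb_G\ge 1$, obtains connectivity of $G$ by a case analysis on which sides a stray component touches (using the free-point lemmas exactly as you anticipate), and finishes with the same ``two bridges would create a cycle in a stage'' argument for the upper bounds; your sketch of the connectivity case analysis is less detailed than the paper's explicit NE/E/EW breakdown, but the structure and the lemmas invoked are the same.
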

}

\ifabstract
\later{
\begin{proof}
Assume that $\mathbf{T}$ is a tree fractal. Thus, $\mathbf{T}$ is
acyclic and connected. If $nhb_G < 1$ or $nvb_G < 1$, then
$\mathbf{T}$ is trivially disconnected. Thus, $nhb_G \geq 1$, $nvb_G
\geq 1$.

Since $\mathbf{T}$ is acyclic, $G$ must be acyclic as well, for if $G$ were not acyclic, then $\mathbf{T}$ would not be, as $G \subset \mathbf{T}$.

We will now show that $G$ is connected. To see this, assume that $G$
is disconnected. First, note that, if $G$ has a connected component
contained strictly within its interior, then $\mathbf{T}$ is trivially
disconnected.

Second, if $G$ is disconnected, then $G$ contains a connected
component that touches at most two sides of $G$. To see this, note
that Lemma~\ref{lem:connected-implies-v-and-h-bridges} says that $G$
has at least one connected h-bridge and at least one connected
v-bridge. If $G$ had a connected component, say $C$, that touched
three or more sides of $G$, then due to the existence of at least one
connected h-bridge and at least one connected v-bridge, $G$ would
necessarily have another connected component, say $C'$, that could
only touch at most two sides of $G$.

We now proceed with a case analysis based on the number of sides of
$G$ that the connected component touches (one or two sides) and
the relative positions of these sides (adjacent or opposite sides).

Case 1: Assume that $G$ has a connected component, say $C$, that does
not contain the origin but does contain points in the north-most row
and east-most column of $G$ (and there is no path in $G$ from the
origin to any point in $C$). We will call this case
``NE''. Lemma~\ref{lem:connected-implies-v-and-h-bridges} says that
$G$ has at least one connected h-bridge and at least one connected
v-bridge. Therefore, Lemma~\ref{lem:north-free-point} says that $G$
has a north-free point not in the north-most row, say $\vec{x}_{N}$,
and Lemma~\ref{lem:north-east-free-point} says that $G$ has an
east-free point in the north-most row but not in the east-most column,
say $\vec{x}_{NE}$. Let $C' = C + g^2 \vec{x}_N + g
\vec{x}_{NE}$. Since $\vec{x}_N$ is north-free and not in the
north-most row of $G$, $N\left(\vec{x}_N\right) \not \in G$, whence
$\mathbf{T} \cap \left(\left\{0,\ldots, g^2-1\right\}^2 + g^2
N\left(\vec{x}_N\right)\right) = \emptyset$.  Since $\vec{x}_{NE}$ is
in the north-most row, this means the north-most point in every column
of $C'$ is north-free in $\mathbf{T}$. Since $\vec{x}_{NE}$ is not in
the east-most column of $G$, $E\left(\vec{x}_{NE}\right) \not \in G$,
whence $\mathbf{T} \cap \left( g^2 \vec{x}_N + \left(
\left\{0,\ldots,g-1\right\}^2 + g E\left(\vec{x}_{NE}\right) \right)
\right) = \emptyset$. This means that the east-most point in every row
of $C'$ is east-free in $\mathbf{T}$. We also know that the west-most
point in every row of $C$ is west-free in $G$ and the south-most point
in every column of $C$ is south-free in $G$, therefore the west-most
point in every row of $C'$ is west-free in $\mathbf{T}$ and the
south-most point in every column of $C'$ is south-free in
$\mathbf{T}$. Thus, there is no path in $\mathbf{T}$ from any point in
$C'$ to the origin, which contradicts the assumption that $\mathbf{T}$
is connected. The ``NW'' and ``SE'' cases can be handled with a
similar argument.

Case 2: Assume that $G$ has a connected component, say $C$, that
contains points in the east-most column but does not contain the
origin nor points in the west-most column of $G$ nor the north-most or
south-most rows of $G$. This is the ``E'' case. In this case,
Lemma~\ref{lem:east-free-point} says that there is an east-free point
in $G$ that is not in the east-most column of $G$. Call this point
$\vec{x}_{E}$ and define $C' = C+ g\vec{x}_E$. Following directly from
the definition of the ``E'' case, the north-most point in every column
of $C$ is north-free in $G$, the south-most point in every column of
$C$ is south-free in $G$ and the west-most point in every row of $C$
is west-free in $G$. From the definition of $C'$ and the fact that
$\vec{x_E}$ is east-free, it follows that the east-most (respectively,
west-most) point in every row of $C'$ is east-free (respectively,
west-free) in $\mathbf{T}$. Similarly, the north-most (respectively,
south-most) point in every column of $C'$ is north-free (respectively,
south-free) in $\mathbf{T}$. Therefore, there is no path in
$\mathbf{T}$ from any point in $C'$ to the origin, which contradicts
the assumption that $\mathbf{T}$ is connected.  The ``N'' case can be
handled with a similar argument.

Case 3: Assume that $G$ has a connected component, say $C$, that
contains points in both the east-most and west-most columns of
$G$. This is the ``EW'' case. In this case, since $G$ contains at
least one connected v-bridge, $C$ must contain all connected v-bridges
of $G$ (since $C$ must have a non-empty intersection with each
connected v-bridge in $G$). Therefore, $C$ touches all four sides of
$G$. If $C$ contains the origin, then there must exist another
disconnected component, say $C'$, that does not contain the origin and
$C'$ must belong to one of the previous cases. If $C$ does not contain
the origin, then the origin itself must be part of a connected
component that is not connected to $C$ nor to any other point in
$\mathbf{T}$, which contradicts the assumption that $\mathbf{T}$ is
connected.  The ``NS'' case can be handled with a similar argument.

Therefore, in all cases, $G$ is connected and we may conclude that $G$
is a tree.

Finally, since $G$ is a tree, it must be the case that $nvb_G \leq 1$ and $nhb_G \leq 1$, otherwise $\mathbf{T}$ would contain a cycle, whence $nvb_G = nhb_G = 1$.

Now we prove that if $G$ is a tree and $nhb_G = nvb_G = 1$, then
$\mathbf{T}$ is a tree.

Assume that $G$ is a tree and $nhb_G = nvb_G = 1$. Then
$P_{\mathbf{T}}(1)$ holds (since $G=T_1)$. Furthermore, by
Lemma~\ref{lem:inductive-step}, $P_{\mathbf{T}}(i) \rightarrow
P_{\mathbf{T}}(i+1)$ for $i \in \Z^+$. Thus, by induction,
$P_{\mathbf{T}}(i)$ holds for $i \in \Z^+$, which implies that
each stage in $\mathbf{T}$ is a tree. We now prove that $\mathbf{T}$ is a tree.

First, $\mathbf{T}$ is connected, since each stage of $\mathbf{T}$ is connected.

Second, we prove that $\mathbf{T}$ cannot contain a cycle. Assume, for
the sake of obtaining a contradiction, that there exist two distinct
points $p$ and $q$ in $\mathbf{T}$ such that there exist two distinct
simple paths from $p$ to $q$. Since both of these paths must be
finite, the cycle that they form must also be finite. Therefore, this
cycle must be fully contained in some stage of $\mathbf{T}$, which
contradicts the fact that all stages of $\mathbf{T}$ are trees.

In conclusion, $\mathbf{T}$ is connected and acyclic, and is thus a tree.\qed
\end{proof}\

}
\fi

\begin{notation}
The directions $\mathcal{D} = \{N,E,S,W\}$ will be used as functions
from $\mathbb{Z}^2$ to $\mathbb{Z}^2$ such that $N(x,y) = (x,y+1)$,
$E(x,y) = (x+1,y)$, $S(x,y) = (x,y-1)$ and $W(x,y) = (x-1,y)$. Note
that $N^{-1} = S$ and $W^{-1}=E$.
\end{notation}

\begin{notation}
Let $X \subseteq \mathbb{Z}^2$. We say that a point $(x,y) \in X$ is
$D$-\emph{free} in $X$, for some direction $D$, if $D(x,y) \not \in
X$.
\end{notation}

\begin{definition}
Let $G$ be the generator of any $g$-discrete self-similar fractal.  A
\emph{pier} is a point in $G$ that is $D$-free for exactly three of
the four directions in $\mathcal{D}$. We say that a pier $(p,q)$ is
\emph{$D$-pointing} if $D^{-1}(p,q) \in G$. Note that a pier always points
in exactly one direction
\end{definition}

Finally, the following observation follows from the
fact that a tree with more than one vertex must contain at least two
leaf nodes.

\begin{observation}
\label{obs:piers}
If $G$ is the generator of any discrete self-similar fractal and $G$ is a tree, then it must
contain at least two piers.
\end{observation}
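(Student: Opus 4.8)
The plan is to reduce the statement to the classical graph-theoretic fact quoted just before it, by recognizing that a \emph{pier} is nothing but a leaf of the full grid graph $\fullgridgraph_G$. First I would set up the dictionary between the geometric and graph-theoretic language. By definition of $\fullgridgraph_G$, the degree of a vertex $(x,y) \in G$ is exactly the number of directions $D \in \mathcal{D}$ for which $D(x,y) \in G$, since the four grid-neighbors of $(x,y)$ are precisely $N(x,y), E(x,y), S(x,y), W(x,y)$. A point is $D$-free in $G$ precisely when $D(x,y) \notin G$, so a point is $D$-free for exactly three of the four directions if and only if it has exactly one neighbor in $G$, i.e., if and only if it is a degree-one vertex of $\fullgridgraph_G$. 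Thus piers and leaves of $\fullgridgraph_G$ are the same objects.

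Next I would verify that $G$ has more than one vertex, so that the classical fact applies. This is immediate: the definition of a generator requires $\{(0,0)\} \subsetneq G \subseteq \mathbb{N}_g^2$ with at least one point in every row and column, and since $g > 1$ there are at least $g \geq 2$ rows, each containing a point of $G$; hence $|G| \geq 2$. (Even the strict inclusion $\{(0,0)\} \subsetneq G$ alone already gives $|G| \geq 2$.) Since $G$ is assumed to be a tree, $\fullgridgraph_G$ is a finite tree on at least two vertices.

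Finally I would invoke the standard fact that a finite tree with at least two vertices has at least two leaves; if a self-contained argument is desired, take a longest simple path in $\fullgridgraph_G$ and observe that each of its two endpoints must have degree one (otherwise the path could be extended or a cycle would be forced, contradicting either maximality or acyclicity). Translating "at least two leaves" back through the dictionary established above yields at least two piers, completing the proof. I do not expect any real obstacle here; the only points requiring care are the purely bookkeeping observations that adjacency and $D$-freeness are both taken with respect to grid-adjacency (so that degree-one vertices coincide exactly with piers) and that $G$ is nontrivial, both of which follow directly from the definitions.
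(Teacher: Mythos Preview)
Your proposal is correct and follows exactly the approach the paper indicates: the paper's justification is simply that the observation ``follows from the fact that a tree with more than one vertex must contain at least two leaf nodes,'' and your argument fleshes out precisely this, establishing that piers are leaves of $\fullgridgraph_G$ and that $|G|\geq 2$.
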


\subsection{The Closed Window Movie Lemma}\label{sec:window-movie-lemma}
In this subsection, we develop a more accommodating (modified) version
of the general Window Movie Lemma (WML) \cite{WindowMovieLemma}. Our
version of the WML, which we call the ``Closed Window Movie Lemma'',
allows us to replace one portion of a tile assembly with another,
assuming certain extra ``containment'' conditions are met. Moreover,
unlike in the original WML that lacks the extra containment
assumptions, the replacement of one tile assembly with another in our
Closed WML only goes ``one way'', i.e., the part of the tile assembly
being used to replace another part cannot itself be replaced by the
part of the tile assembly it is replacing. We must first define some
notation that we will use in our closed Window Movie Lemma.

A window $w$ is a set of edges forming a cut-set of the full grid
graph of $\mathbb{Z}^2$. For the purposes of this paper, we say that a
\emph{closed window} $w$ induces a cut\footnote{A \emph{cut} is
  a partition of the vertices of a graph into two disjoint subsets
  that are joined by at least one edge.}
 of the full grid graph of
$\mathbb{Z}^2$, written as $C_w = (C_{<\infty},C_\infty)$, where
$C_{\infty}$ is infinite, $C_{<\infty}$ is finite and for all pairs of
points $\vec{x},\vec{y} \in C_{<\infty}$, every simple path connecting
$\vec{x}$ and $\vec{y}$ in the full grid graph of $C_{<\infty}$ does
not cross the cut $C_w$. We call the set of vertices that make up
$C_{<\infty}$ the \emph{inside} of the window $w$, and write
$inside(w) = C_{<\infty}$ and $outside(w) = \mathbb{Z}^2 \backslash\,inside(w) =
C_\infty$. We say that a window $w$ is \emph{enclosed} in another
window $w'$ if $inside(w) \subseteq inside(w')$.

Given a window $w$ and an assembly $\alpha$, a window that {\em
  intersects} $\alpha$ is a partitioning of $\alpha$ into two
configurations (i.e., after being split into two parts, each part may
or may not be disconnected). In this case we say that the window $w$
cuts the assembly $\alpha$ into two configurations $\alpha_L$
and~$\alpha_R$, where $\alpha = \alpha_L \cup \alpha_R$. For
notational convenience, if $w$ is a closed window, we write $\alpha_I$
for the assembly inside $w$ and $\alpha_O$ for the assembly outside
$w$.  Given a window $w$, its translation by a vector $\vec{c}$,
written $ w + \vec{c}$ is simply the translation of each of $w$'s
elements (edges) by~$\vec{c}$.

%\vspace*{-2cm}\hspace*{\textwidth}
%\begin{minipage}{2.5cm}
%I could not find a definition for $\cup$ applied to two configurations (David)
%\end{minipage}

For a window $w$ and an assembly sequence $\vec{\alpha}$, we define a window movie~$M$ to be the order of placement, position and glue type for each glue that appears along the window $w$ in $\vec{\alpha}$. Given an assembly sequence $\vec{\alpha}$ and a window $w$, the associated {\em window movie} is the maximal sequence $M_{\vec{\alpha},w} = (v_{0}, g_{0}) , (v_{1}, g_{1}), (v_{2}, g_{2}), \ldots$ of pairs of grid graph vertices $v_i$ and glues $g_i$, given by the order of the appearance of the glues along window $w$ in the assembly sequence $\vec{\alpha}$.
Furthermore, if $k$ glues appear along $w$ at the same instant (this happens upon placement of a tile that has multiple  sides  touching $w$) then these $k$ glues appear contiguously and are listed in lexicographical order of the unit vectors describing their orientation in $M_{\vec{\alpha},w}$.

Let $w$ be a window and $\vec{\alpha}$ be an assembly sequence and $M = M_{\vec{\alpha},w}$. We use the notation $\mathcal{B}\left(M\right)$ to denote the \emph{bond-forming submovie} of $M$, i.e., a restricted form of $M$ that consists of only those steps of $M$ that place glues that eventually form positive-strength bonds in the assembly $\alpha = \res{\vec{\alpha}}$. Note that every window movie has a unique bond-forming submovie.

\begin{lemma}[Closed Window Movie Lemma]\label{lem:wml}
Let $\vec{\alpha}~=~(\alpha_i~|~0\leq~i<~l)$, with $l \in \mathbb{Z}^+
\cup \{\infty\}$, be an assembly sequence in some TAS $\mathcal{T}$
with result $\alpha$. Let $w$ be a closed window that partitions
$\alpha$ into $\alpha_I$ and $\alpha_O$, and $w'$ be a closed window
that partitions $\alpha$ into $\alpha_I'$ and $\alpha_O'$.  If
$\mathcal{B}(M_{\vec{\alpha},w}) + \vec{c} =
\mathcal{B}(M_{\vec{\alpha},w'})$ for some $\vec{c}\neq(0,0)$ and
the window $w+\vec{c}$ is enclosed in $w'$, then the assembly $\alpha'_O \cup (\alpha_I + \vec{c})$ is
in $\mathcal{A[\mathcal{T}]}$.
\end{lemma}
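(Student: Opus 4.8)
The plan is to exhibit an assembly sequence in $\mathcal{T}$ that produces $\gamma := \alpha_O' \cup (\alpha_I + \vec{c})$ starting from the seed; connectivity and $\tau$-stability of $\gamma$ then come for free, since every producible assembly is connected and stable. First I would check that $\gamma$ is well defined: because $w+\vec{c}$ is enclosed in $w'$ we have $\dom(\alpha_I+\vec{c}) \subseteq inside(w)+\vec{c} = inside(w+\vec{c}) \subseteq inside(w')$, whereas $\dom(\alpha_O') \subseteq outside(w')$, so the two domains are disjoint and $\alpha_O' \cup (\alpha_I+\vec{c})$ is an honest configuration. I would also record the standing fact that the seed lies in $outside(w')$ (and in $outside(w)$), so that the retained part $\alpha_O'$ contains the seed while the transplanted part $\alpha_I+\vec{c}$ is grown entirely by attachment across the window; this is precisely the asymmetry that makes the replacement one-directional.

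The key structural step is to extract, from the hypothesis $\mathcal{B}(M_{\vec\alpha,w}) + \vec{c} = \mathcal{B}(M_{\vec\alpha,w'})$ together with the enclosure of $w+\vec{c}$ in $w'$, the statement that the bond-forming edges of $w'$ are precisely the $\vec{c}$-translates of the bond-forming edges of $w$. Indeed, each entry of a bond-forming submovie is a (position, glue) pair recorded on an edge carrying a positive-strength bond, so equality of the two submovies forces the bond-carrying edges to coincide after translation. For such an edge, its inside endpoint lies in $inside(w)+\vec{c} = inside(w+\vec{c}) \subseteq inside(w')$, while its outside endpoint must lie in $outside(w')$ since the edge still crosses $w'$; hence along every crossing bond the windows $w'$ and $w+\vec{c}$ cut the same edge, and the two incident glues, as well as the order in which they appear, agree with those of the corresponding $w$-edge. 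This guarantees both that the tiles of $\alpha_I+\vec{c}$ bind to the tiles of $\alpha_O'$ exactly where $\alpha_I'$ used to bind to $\alpha_O'$, and that no $w'$-crossing bond ever reached into the discarded annulus $inside(w')\setminus inside(w+\vec{c})$, so that deleting that region is harmless.

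With this in hand I would build $\vec\gamma$ by interleaving two threads obtained by restricting $\vec\alpha$: thread $O$ places the tiles lying in $outside(w')$ in their original order (producing $\alpha_O'$), and thread $I$ places the tiles lying in $inside(w)$, shifted by $\vec{c}$, in their original order (producing $\alpha_I+\vec{c}$); a position belonging to both regions simply contributes one placement to each thread. The two threads are merged so that crossing bonds are formed in the order dictated by the common bond-forming submovie. Validity of each placement is then checked case-wise: a tile whose binding neighbors all lie on its own side of the relevant window inherits bonds of total strength at least $\tau$ directly from $\vec\alpha$, because its local neighborhood is translation-identical to the original; while a tile adjacent to $w'$ (respectively $w+\vec{c}$) that relied on a crossing bond receives the matching glue from the other thread, which the synchronization has already placed, the glue identities matching by the submovie equality established above.

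The main obstacle I anticipate is scheduling: I must produce a single linear order that simultaneously respects thread $O$'s internal order, thread $I$'s internal order, and the mutual across-window dependencies, with no cyclic ``deadlock.'' The resolution is that all cross-window dependencies are governed by a single object---the bond-forming submovie---which is literally the same sequence for $w$ and for $w'$ up to the translation $\vec{c}$; consequently the two threads agree on the relative order of every crossing-bond event, so their demands are met by one common interleaving. Anchoring this interleaving at the seed (which sits in $\alpha_O'$) and inducting on the merged step count shows each intermediate configuration is $\tau$-stable and that the limiting assembly is $\gamma$, whence $\gamma \in \prodasm{T}$. I would close by noting that the enclosure hypothesis is used twice---once for disjointness of domains and once to confine every $w'$-crossing bond to a shared edge---and that dropping it is exactly what obstructs the reverse replacement, explaining the one-way character of the lemma.
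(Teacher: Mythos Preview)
Your approach is essentially the paper's: build $\vec{\gamma}$ by interleaving two restrictions of $\vec{\alpha}$ (tiles outside $w'$, and tiles inside $w$ translated by $\vec{c}$), synchronized on the common bond-forming submovie, then verify each placement has enough binding strength and that the two domains are disjoint via the enclosure hypothesis. The paper makes the interleaving concrete with explicit pseudocode driven by the submovie index, but the content is the same, and your ``no deadlock'' remark is exactly why that pseudocode works.

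One point to tighten: you assert as a ``standing fact'' that the seed lies in $outside(w')$. This is not a hypothesis of the lemma. The paper argues that, because the bond-forming submovies match under a nonzero translation and both windows are closed, the seed must lie either in $\dom\alpha_O \cap \dom\alpha'_O$ or in $\dom\alpha_I \cap \dom\alpha'_I$ (it cannot sit in the annulus), and then takes the outside case without loss of generality. You should supply that justification rather than assume it; as written, your proof would not cover the case where the seed sits inside both windows. Also, your closing remark attributes the one-directionality of the lemma to the seed's location, but it is really the enclosure hypothesis that breaks the symmetry: enclosure gives you $\dom(\alpha_I+\vec{c}) \cap \dom\alpha'_O = \emptyset$, whereas the reverse containment $\dom(\alpha'_I-\vec{c}) \cap \dom\alpha_O = \emptyset$ need not hold.
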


\begin{proof}
Before we proceed with the proof, the next paragraph introduces some
notation taken directly from \cite{WindowMovieLemma}.

For an assembly sequence $\vec{\alpha} = (\alpha_i \mid 0 \leq i < l)$, we write $\left| \vec{\alpha} \right| = l$ (note that if $\vec{\alpha}$ is infinite, then $l = \infty$). We write $\vec{\alpha}[i]$ to denote $\vec{x} \mapsto t$, where $\vec{x}$ and~$t$ are such that $\alpha_{i+1} = \alpha_i + \left(\vec{x} \mapsto t\right)$, i.e., $\vec{\alpha}[i]$ is the placement  of tile type $t$ at position~$\vec{x}$, assuming that $\vec{x} \in \partial_{t}\alpha_i$. We write $\vec{\alpha}[i] + \vec{c}$, for some vector $\vec{c}$, to denote $\left(\vec{x}+\vec{c}\right) \mapsto t$. We define $\vec{\alpha} = \vec{\alpha} + \left(\vec{x} \mapsto t\right) = (\alpha_i \mid 0 \leq i < k + 1)$, where $\alpha_{k} = \alpha_{k-1} + \left(\vec{x} \mapsto t\right)$ if $\vec{x} \in \partial_{t}\alpha_{k-1}$ and undefined otherwise, assuming $\left| \vec{\alpha} \right| > 0$. Otherwise, if $\left| \vec{\alpha} \right| = 0$, then $\vec{\alpha} = \vec{\alpha} + \left(\vec{x} \mapsto t \right) = (\alpha_0)$, where $\alpha_0$ is the assembly such that $\alpha_0\left(\vec{x}\right) = t$ and is undefined at all other positions. This is our notation for appending steps to the assembly sequence $\vec{\alpha}$: to do so, we must specify a tile type $t$ to be placed at a given location $\vec{x} \in \partial_t\alpha_{i}$. If $\alpha_{i+1} = \alpha_i + \left(\vec{x} \mapsto t\right)$, then we write $Pos\left(\vec{\alpha}[i]\right) = \vec{x}$ and $Tile\left(\vec{\alpha}[i]\right) = t$. For a window movie $M = (v_0,g_0), (v_1,g_1), \ldots$, we write $M[k]$ to be the pair $\left(v_{k},g_{k}\right)$ in the enumeration of $M$ and $Pos\left(M[k]\right) = v_{k}$, where $v_{k}$ is a vertex of a grid graph.

We now proceed with the proof, throughout which we will assume that $M = \mathcal{B}\left(M_{\vec{\alpha},w}\right)$ and $M' = \mathcal{B}\left(M_{\vec{\alpha},w'}\right)$. Since $M + \vec{c} =
M'$ for some $\vec{c}\neq(0,0)$ and $w$ and $w'$ are both closed windows, it must be the case that the seed tile of $\alpha$ is in $\dom{\alpha_O} \cap \dom{\alpha'_O}$ or in $\dom{\alpha_I} \cap \dom{\alpha'_I}$. In other words, the seed tile cannot be in $\dom{\alpha_I} \backslash\, \dom{\alpha'_I}$ nor in $\dom{\alpha'_I} \backslash\, \dom{\alpha_I}$. Therefore, assume without loss of generality that the seed tile is in $\dom{\alpha_O} \cap \dom{\alpha'_O}$.

The algorithm in Figure \ref{fig:algo-seq} describes how to produce a new valid assembly sequence $\vec{\gamma}$.

%\begin{figure}[htp]
%\begin{algorithm}[H]
%\SetAlgoLined
%Initialize $i$, $j$, $k = 0$ and $\vec{\gamma}$ to be empty
%
%\While{$i<|\vec{\alpha}|$ or $j<|\vec{\alpha}|$}{
%  \If{$Pos(M'[k]) \in \dom{\alpha'_O}$}{
%    \While{$i < |\vec{\alpha}|$ and $Pos(\vec{\alpha}[i])\neq Pos(M'[k])$}{
%      \If{$Pos(\vec{\alpha}[i]) \in \dom{\alpha'_O}$}{$\vec{\gamma} = \vec{\gamma} + \vec{\alpha}[i]$}
%      $i = i + 1$
%    }
%    \If{$i<|\vec{\alpha}|$}{
%      $\vec{\gamma} = \vec{\gamma} + \vec{\alpha}[i]$
%
%      $i = i + 1$
%    }
%  }
%  \ElseIf{$Pos(M'[k]) \in \dom{\alpha'_I}$}{
%    \While{$j<|\vec{\alpha}|$ and $Pos(\vec{\alpha}[j])\neq Pos(M[k])$}{
%      \If{$Pos(\vec{\alpha}[j]) \in \dom{\alpha_I}$}{
%        $\vec{\gamma} = \vec{\gamma} + \left(\vec{\alpha}[j] + \vec{c}\right)$}
%
%        $j = j + 1$
%
%    }
%    \If{$j<|\vec{\alpha}|$}{
%      $\vec{\gamma} = \vec{\gamma} + \vec{\alpha}[j]$
%
%      $j = j + 1$
%    }
%  }
%  \ElseIf{$k\geq |M|$}{
%    \If{$i<|\vec{\alpha}|$}{
%    \If{$Pos(\vec{\alpha}[i]) \in \dom{\alpha'_O}$}{$\vec{\gamma} = \vec{\gamma} + \vec{\alpha}[i]$}
%
%      $i = i + 1$
%    }
%
%    \If{$j<|\vec{\alpha}|$}{
%    \If{$Pos(\vec{\alpha}[j]) \in \dom{\alpha_I}$}{$\vec{\gamma} = \vec{\gamma} + (\vec{\alpha}[j] + \vec{c})$}
%
%      $j = j + 1$
%    }
%  }
%
%  $k = k + 1$
%}
%\Return $\vec{\gamma}$
%
%\end{algorithm}
%\caption{The algorithm to produce a valid assembly sequence $\vec{\gamma}$.}
%\label{fig:algo-seq}
%\end{figure}

\begin{figure}[htp]
\begin{algorithm}[H]
\SetAlgoLined
Initialize $i$, $j= 0$ and $\vec{\gamma}$ to be empty

\For{$k = 0$ \KwTo $|M| - 1$}{
%\While{$i<|\vec{\alpha}|$ or $j<|\vec{\alpha}|$}{
  \If{$Pos(M'[k]) \in \dom{\alpha'_O}$}{
    \While{$Pos(\vec{\alpha}[i])\neq Pos(M'[k])$}{
      \If{$Pos(\vec{\alpha}[i]) \in \dom{\alpha'_O}$}{$\vec{\gamma} = \vec{\gamma} + \vec{\alpha}[i]$}
      $i = i + 1$
    }
    $\vec{\gamma} = \vec{\gamma} + \vec{\alpha}[i]$

    $i = i + 1$
  }
  \Else {
    \While{$Pos(\vec{\alpha}[j])\neq Pos(M[k])$}{
      \If{$Pos(\vec{\alpha}[j]) \in \dom{\alpha_I}$}{
        $\vec{\gamma} = \vec{\gamma} + \left(\vec{\alpha}[j] + \vec{c}\right)$}

        $j = j + 1$
    }
    $\vec{\gamma} = \vec{\gamma} + \vec{\alpha}[j]$

    $j = j + 1$

  }

}

\While{$inside(w) \cap \partial \res{\vec{\gamma}} \ne \emptyset$}{
    \If{$Pos(\vec{\alpha}[j]) \in \dom{\alpha_I}$}{$\vec{\gamma} = \vec{\gamma} + (\vec{\alpha}[j] + \vec{c})$}

    $j = j + 1$
    }

\While {$i < |\vec{\alpha}|$}{
  \If{$Pos(\vec{\alpha}[i]) \in \dom{\alpha'_O}$}{$\vec{\gamma} = \vec{\gamma} + \vec{\alpha}[i]$}

    $i = i + 1$
}

\Return $\vec{\gamma}$

\end{algorithm}
\caption{The algorithm to produce a valid assembly sequence $\vec{\gamma}$.}
\label{fig:algo-seq}
\vspace{-20pt}
\end{figure}

If we assume that  the assembly sequence $\vec{\gamma}$ ultimately produced by the algorithm is valid, then the result of $\vec{\gamma}$ is indeed $\alpha'_O \cup \left( \alpha_I + \vec{c} \right)$. Observe that $\alpha_I$ must be finite, which implies that $M$ is finite. If $|\vec{\alpha}| < \infty$, then all loops will terminate. If $|\vec{\alpha}| = \infty$, then $|\alpha'_O| = \infty$ and the first two loops will terminate and the last loop will run forever. In either case, for every tile in~$\alpha'_O$ and $\alpha_I + \vec{c}$, the algorithm adds a step to the sequence $\vec{\gamma}$ involving the addition of this tile to the assembly. However, we need to prove that the assembly sequence $\vec{\gamma}$ is valid. It may be the case that either: 1. there is insufficient bond strength between the tile to be placed and the existing neighboring tiles, or 2. a tile is already present at this location.

\textbf{Case 1:}
In this case, we claim the following: at each step of the algorithm, the current version of $\vec{\gamma}$ is a valid assembly sequence whose result is a producible subassembly of $\alpha'_O \cup \left(\alpha_I + \vec{c}\right)$. Note that three loops in the algorithm iterate through all steps of $\vec{\alpha}$, such that at any time when adding $\vec{\alpha}[i]$ (or $\vec{\alpha}[j] + \vec{c}$) to $\vec{\gamma}$, all steps of the window movie occurring before $\vec{\alpha}[i]$ (or $\vec{\alpha}[j]$) in $\vec{\alpha}$ have occurred. Similarly, all tiles in $\alpha'_O$ (or $\alpha_I + \vec{c}$) added to $\alpha$ before step $i$ in the assembly sequence have occurred.

So, if the tile $Tile\left(\vec{\alpha}[i]\right)$ that is added to the subassembly of $\alpha$ produced after $i-1$ steps can bond at a location in $\alpha'_O$ to form a $\tau$-stable assembly, then the same tile added to the producible assembly of  $\vec{\gamma}$ must also bond to the same location in $\vec{\gamma}$, as the neighboring glues consist of (i) an identical set of glues from tiles in the subassembly of $\alpha'_O$ and (ii) glues on the side of the window movie containing~$\alpha_I + \vec{c}$.  Similarly, the tiles of $\alpha_I + \vec{c}$ must also be able to bind.

\textbf{Case 2:} Since we only assume that $\mathcal{B}\left(M_{\vec{\alpha},w}\right) + \vec{c} = \mathcal{B}\left(M_{\vec{\alpha},w'}\right)$, as opposed to the stronger condition $\mathcal{B}\left(M_{\vec{\alpha},w+\vec{c}}\right) = \mathcal{B}\left(M_{\vec{\alpha},w'}\right)$, which is assumed in the original WML, we must show that $\dom{\left(\alpha_I + \vec{c}\right)} \cap \dom{\alpha'_O} = \emptyset$. To see this, observe that, by assumption, $w + \vec{c}$ is enclosed in $w'$, which, by definition, means that $inside\left(w+\vec{c}\right) \subseteq inside(w')$. Then we have $\vec{x} \in \dom{\alpha'_O} \Rightarrow \vec{x} \in outside(w') \Rightarrow \vec{x} \not \in inside\left(w'\right) \Rightarrow \vec{x} \not \in inside\left(w+\vec{c}\right) \Rightarrow \vec{x} \not \in \dom{\left(\alpha_I + \vec{c}\right)}$. Thus, locations in $\alpha_I + \vec{c}$ only have tiles from $\alpha_I$ placed in them, and locations in $\alpha'_O$ only have tiles from $\alpha'_O$ placed in them.

So the assembly sequence of $\vec{\gamma}$ is valid, i.e., every addition to $\vec{\gamma}$ adds a tile to the assembly to form a new producible assembly. Since we have a valid assembly sequence, as argued above, the finished producible assembly is~$\alpha'_O \cup \left(\alpha_I + \vec{c}\right)$.
\qed
\end{proof}

\section{Main result: scaled tree fractals do not strictly self-assemble}\label{sec:main-result}

In this section, we first define some notation and then prove our main
result.

\subsection{Notation}

Recall that each stage $X_s$ ($s>1$) of a $g$-dssf (scaled by a factor
$c$) is made up of copies of the previous stage $X_{s-1}$, each of
which is a square of size $cg^{s-1}$.
% Recall that the notation
 %$S_s^c(e,f)$ allows us to refer to one of these squares.
\begin{comment}
The following notation enables us to
refer to a closed window around one of these squares that make up $X_s$.

\begin{notation}\label{windows}
Let $s, c \in \Z^+$, and $e, f \in \mathbb{N}_g$. We use
$W_s^c(e,f)$ to denote the square-shaped, closed window whose inside
is $S_s^c(e,f)$.
%We use $W_s^c(e,f)$ to denote the square window that
%encloses the set of all positions $(x,y)$ in $\mathbb{N}^2$ satisfying
%$(cg^{s-1})e\leq x < (cg^{s-1})(e+1)$ and
%$(cg^{s-1})f\leq y < (cg^{s-1})(f+1)$.
\end{notation}

In Figure~\ref{fig:case1example} below, the red windows, listed
from smallest to largest, are $W_1^1(1,2)$, $W_2^1(1,2)$ and
$W_3^1(1,2)$, respectively. Similarly, the green windows in that
figure, listed from smallest to largest, are $W_1^1(3,3)$,
$W_2^1(3,3)$ and $W_3^1(3,3)$, respectively.
\end{comment}
In the proof of our main result, we will need to refer to one of
the squares of size $cg^{s-2}$ inside the copies of stage $X_{s-1}$, leading to
the following notation.

\begin{notation}\label{smallwindows}
Let $c \in \Z^+$, $1<s \in \N$ and $1<g \in
\N$. Let $e, f, p, q \in \N_g$. We use
$S_s^c(e,f,p,q)$ to denote $\{0,1,\ldots,cg^{s-2}-1\}^2+ cg^{s-1}(e,f)
+ cg^{s-2}(p,q)$ and
%%Let $s, c \in \mathbb{Z}^+$, and $e, f,p, q \in \mathbb{N}_g$. We use
$W_s^c(e,f,p,q)$ to denote the square-shaped, closed window whose inside
is $S_s^c(e,f,p,q)$.
%Let $s, c \in \mathbb{Z}^+$, and $e, f, p, q \in
%\mathbb{N}_g$.  We use $W_s^c(e,f,p,q)$ to denote the square window that
%encloses the set of all positions $(x,y)$ in $\mathbb{N}^2$ satisfying
%$(cg^{s-1})e+(cg^{s-2})p\leq x < (cg^{s-1})e+(cg^{s-2})(p+1)$ and
%$(cg^{s-1})f+(cg^{s-2})q\leq y < (cg^{s-1})f+(cg^{s-2})(q+1)$.
\end{notation}

In Figure~\ref{fig:case2example} below, the small and large red
windows are $W_2^1(0,1,3,2)$ and $W_3^1(0,1,3,2)$, respectively.

Next, we will need to translate a small window to a position inside a
larger window. These two windows will correspond to squares at the
same relative position in different stages $i$ and $j$ of a $g$-dssf.

\begin{notation}\label{translations}
Let $c \in \mathbb{Z}^+$, $i,j \in \N\,\backslash\{0,1\}$, with $i<j$, and $e, f,p, q \in
\N_g$.
%We use $\vec{t}^c_{i\rightarrow j}(e,f)$ to denote the
%vector joining the southwest corner of $W_i^c(e,f)$ to the southwest
%corner of $W_j^c(e,f)$. In other words, $\vec{t}^c_{i\rightarrow j}(e,f) =
%(c(g^{j-1}-g^{i-1})e,c(g^{j-1}-g^{i-1})f)$. Similarly, if $p, q \in
%\mathbb{N}_g$,
We use $\vec{t}^c_{i\rightarrow j}(e,f,p,q)$ to denote
the vector joining the southwest corner of $W_i^c(e,f,p,q)$ to the
southwest corner of $W_j^c(e,f,p,q)$. In other words, $\vec{t}^c_{i\rightarrow
  j}(e,f,p,q) = \left(c\left(g^{j-1}-g^{i-1}\right)e+c\left(g^{j-2}-g^{i-2}\right)p,
c\left(g^{j-1}-g^{i-1}\right)f+c\left(g^{j-2}-g^{i-2}\right)q\right)$.
\end{notation}

For example,
%in Figure~\ref{fig:case1example} below,
%$\vec{t}_{1\rightarrow 2}^1(1,2) = (3,6)$, $\vec{t}^1_{2\rightarrow
% 3}(1,2) = (12,24)$. And
in Figure~\ref{fig:case2example} below,
$\vec{t}_{2\rightarrow 3}^1(0,1,3,2) = (9,18)$.

Finally, to apply Lemma~\ref{lem:wml}, we will need the bond-forming
submovies to line up. Therefore, once the two square windows share
their southwest corner after using the translation defined above, we
will need to further translate the smallest one either up or to the
right, or both, depending on which side of the windows contains the
bond-forming glues, which, in the case of scaled tree fractals, always
form a straight (vertical or horizontal) line of length $c$. We will
compute the coordinates of this second translation in our main
proof. For now, we establish an upper bound on these coordinates that
will ensure that the translated window will remain enclosed in the
larger window.

\begin{comment}

\both{
\begin{lemma}\label{lem:enclosure}
Let $c \in \mathbb{Z}^+$, $i,j \in\N\,\backslash\{0,1\}$, with $i<j$, $e, f \in \mathbb{N}_g$,
and $x,y \in \mathbb{N}$. Let $m = c(g^{j-1}-g^{i-1})$. If $x\leq m$
and $y\leq m$, then the window $W^c_i(e,f) +\vec{t}^c_{i\rightarrow
  j}(e,f)+(x,y)$ is enclosed in the window $W^c_j(e,f)$.
\end{lemma}
}

\ifabstract
\later{

\begin{proof}
Let $W$ and $w$ denote $W^c_j(e,f)$ and
$W^c_i(e,f)+\vec{t}^c_{i\rightarrow j}(e,f)$, respectively. Since $W$
and $w$ are square windows that have the same southwest corner and
whose respective sizes are $cg^{j-1}$ and $cg^{i-1}$, $W$ encloses
$w$. The eastern side of $w + (x,0)$ still lies within $W$, because
the maximum value of $x$ is equal to the difference between the size
of $W$ and the size of $w$. The same reasoning applies to a northward
translation of $w$ by $(0,y)$. In conclusion, $w+(x,y)$ must be enclosed
in $W$, as long as neither $x$ nor $y$ exceeds $m$.\qed
\end{proof}

A similar lemma holds for the windows defined using Notation~\ref{smallwindows}.

}
\fi
\end{comment}

\both{
\begin{lemma}\label{lem:enclosure2}
Let $c \in \mathbb{Z}^+$, $i,j \in\N\,\backslash\{0,1\}$, with $i<j$,
$e, f,p,q \in \mathbb{N}_g$, and $x,y \in \mathbb{N}$. Let $m =
c(g^{j-2}-g^{i-2})$. If $x\leq m$ and $y\leq m$, then the window
$W^c_i(e,f,p,q) +\vec{t}^c_{i\rightarrow j}(e,f,p,q)+(x,y)$ is
enclosed in the window $W^c_j(e,f,p,q)$.
\end{lemma}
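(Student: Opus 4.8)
The plan is to reduce the enclosure claim to a pair of one-dimensional interval containments, since all the windows involved are axis-aligned squares and ``enclosed'' was defined purely in terms of containment of insides, $inside(w) \subseteq inside(w')$. So it suffices to establish that $inside\left(W^c_i(e,f,p,q) + \vec{t}^c_{i\rightarrow j}(e,f,p,q)+(x,y)\right) \subseteq inside\left(W^c_j(e,f,p,q)\right)$.

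First I would record the two insides explicitly from Notation~\ref{smallwindows}. The inside of $W^c_i(e,f,p,q)$ is the square $\{0,\ldots,cg^{i-2}-1\}^2$ translated so that its southwest corner sits at $\vec{a}_i = \left(cg^{i-1}e+cg^{i-2}p,\ cg^{i-1}f+cg^{i-2}q\right)$, and the inside of $W^c_j(e,f,p,q)$ is the square $\{0,\ldots,cg^{j-2}-1\}^2$ translated so that its southwest corner sits at $\vec{a}_j$, defined analogously with $j$ in place of $i$. By the very definition of $\vec{t}^c_{i\rightarrow j}(e,f,p,q)$ in Notation~\ref{translations} as the vector joining these two southwest corners, I would observe $\vec{a}_i + \vec{t}^c_{i\rightarrow j}(e,f,p,q) = \vec{a}_j$ (a one-line check against the explicit formula). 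Hence $W^c_i(e,f,p,q)+\vec{t}^c_{i\rightarrow j}(e,f,p,q)$ is precisely the square of side length $cg^{i-2}$ whose southwest corner coincides with that of $W^c_j(e,f,p,q)$, and after the additional shift by $(x,y)$ its southwest corner becomes $\vec{a}_j + (x,y)$.

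Next I would carry out the interval bookkeeping in each coordinate separately. In the horizontal direction, the translated inner window occupies the integer interval $[a + x,\ a + x + cg^{i-2}-1]$, where $a$ is the first coordinate of $\vec{a}_j$, while the outer window occupies $[a,\ a + cg^{j-2}-1]$. Containment of the former in the latter is equivalent to the two inequalities $x \ge 0$ and $x + cg^{i-2}-1 \le cg^{j-2}-1$; the first holds because $x \in \N$, and the second rearranges exactly to $x \le c(g^{j-2}-g^{i-2}) = m$, which is the hypothesis. The vertical direction is identical with $y$ and $m$ in place of $x$ and $m$. Taking the product of the two nested intervals yields $inside\left(W^c_i(e,f,p,q) + \vec{t}^c_{i\rightarrow j}(e,f,p,q)+(x,y)\right) \subseteq inside\left(W^c_j(e,f,p,q)\right)$, which by definition is exactly the assertion that the translated window is enclosed in $W^c_j(e,f,p,q)$.

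There is no real obstacle here: the statement is a routine consequence of the definitions once the southwest corners are identified, and its entire content is the observation that the slack $m = c(g^{j-2}-g^{i-2})$ is exactly the difference in side lengths between the two nested squares, so $(x,y)$ may range over $[0,m]^2$ without the smaller square protruding from the larger one. The only point requiring care is matching the exponents of $g$ (side $cg^{i-2}$ for the inner window versus $cg^{j-2}$ for the outer), since an off-by-one in the stage index would corrupt the bound; this mirrors the reasoning one would use for the coarser windows $W^c_s(e,f)$, with $g^{s-1}$ replaced throughout by $g^{s-2}$.
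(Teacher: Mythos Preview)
Your proof is correct and follows essentially the same approach as the paper's: both align the southwest corners via $\vec{t}^c_{i\rightarrow j}(e,f,p,q)$, observe that the two squares have side lengths $cg^{i-2}$ and $cg^{j-2}$, and conclude that any further shift by $(x,y)$ with $0\le x,y\le m = c(g^{j-2}-g^{i-2})$ keeps the smaller square inside the larger. Your version is simply more explicit in writing out the interval arithmetic, whereas the paper argues the same point verbally in terms of the eastern and northern sides of the translated window.
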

}

\ifabstract
\later{
\begin{proof}
Let $W$ and $w$ denote $W^c_j(e,f,p,q)$ and
$W^c_i(e,f,p,q)+\vec{t}^c_{i\rightarrow j}(e,f,p,q)$,
respectively. Since $W$ and $w$ are square windows that have the same
southwest corner and whose respective sizes are $cg^{j-2}$ and
$cg^{i-2}$, $W$ encloses $w$. The eastern side of $w + (x,0)$ still
lies within $W$, because the maximum value of $x$ is equal to the
difference between the size of $W$ and the size of $w$. The same
reasoning applies to a northward translation of $w$ by $(0,y)$. In
conclusion, $w+(x,y)$ must be enclosed in $W$, as long as neither $x$
nor $y$ exceeds $m$.\qed
\end{proof}
}
\fi

Finally, the following lemma establishes that any scaled tree fractal
$\mathbf{T}^c $ contains an infinite number of closed windows that all
cut the fractal along a single line of glues.

\both{
\begin{lemma} \label{lem:piers}
Let $\mathbf{T}$ be any tree fractal with generator $G$. If $c \in
\mathbb{Z}^+$, then it is always possible to pick one pier $(p,q)$
and one point $(e,f)$, both in $G$, such that, for $1<s \in \N$, $W^c_s(e,f,p,q)$ encloses a configuration that is connected
to $\mathbf{T}^c$ via a single connected (horizontal or vertical) line
of glues of length $c$.
\end{lemma}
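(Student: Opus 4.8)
The plan is to choose the pier $(p,q)$ and the point $(e,f)$ so that the square $S_s^c(e,f,p,q)$ isolates a single copy of the substage $T_{s-2}^c$ (a single $c\times c$ block when $s=2$) that hangs off the rest of $\mathbf{T}^c$ by exactly one bridge. Recall that $S_s^c(e,f,p,q)$ is the sub-square of side $cg^{s-2}$ sitting at relative position $(p,q)$ inside the copy of $T_{s-1}^c$ that occupies relative position $(e,f)$ of $T_s^c$, and $W_s^c(e,f,p,q)$ is the closed window around it. By Theorem~\ref{thm:tree-fractal}, $G$ is a tree with $nhb_G=nvb_G=1$, and by Lemma~\ref{lem:inductive-step} every stage $T_i$ is a tree with $nhb_{T_i}=nvb_{T_i}=1$. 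Consequently, whenever two copies of a stage are adjacent they bond along exactly one h-bridge (resp. v-bridge), which in the $c$-scaled fractal becomes a single connected line of exactly $c$ glues; moreover, by the self-similar placement of bridges in Lemma~\ref{lem:inductive-step}, this bond always occurs in the sub-copies located at the generator's bridge-endpoint positions $(g-1,y_{hb})$, $(0,y_{hb})$, $(x_{vb},g-1)$, $(x_{vb},0)$, where $y_{hb}$ and $x_{vb}$ denote the row and column of $G$'s unique horizontal and vertical bridges.

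First I would make the selection. By Observation~\ref{obs:piers}, $G$ contains at least two piers. I choose $(e,f)$ to be, say, a $D_e$-pointing pier; since a pier has exactly one neighbor in $G$, the copy of $T_{s-1}^c$ at relative position $(e,f)$ has exactly one adjacent copy in $\mathbf{T}^c$, lying in direction $D_e^{-1}$, toward $(e,f)$'s unique $G$-neighbor. I then choose $(p,q)$ to be a pier of $G$ that differs from the single generator bridge-endpoint on the $D_e^{-1}$ side (the relevant one of $(g-1,y_{hb})$, $(0,y_{hb})$, $(x_{vb},g-1)$, $(x_{vb},0)$). Such a pier exists because there are at least two piers and at most one of them can equal that one fixed point.

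Next I would verify the two required properties of the enclosed configuration. For the \emph{single} connection: because $(p,q)$ is a pier it has exactly one $G$-neighbor, so inside the copy at $(e,f)$ the sub-copy at $(p,q)$ is adjacent to exactly one sibling sub-copy, and these meet through the unique bridge of $T_{s-2}$, which scales to one connected (horizontal or vertical) line of $c$ glues crossing $W_s^c(e,f,p,q)$. For the absence of \emph{any other} connection: all sides of the pier other than the neighbor side are $D$-free in $G$, so no other sibling sub-copy touches it; and externally, the copy at $(e,f)$ has an adjacent copy only in direction $D_e^{-1}$, where the inter-copy bond sits in the bridge-endpoint sub-copy, which is not $(p,q)$ by our choice. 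Hence no bond crosses $W_s^c(e,f,p,q)$ except along the single scaled bridge, so the enclosed configuration is joined to $\mathbf{T}^c$ by exactly one line of $c$ glues. Since the choice of $(e,f)$ and $(p,q)$ is made at the generator level, the same window works for every $s>1$, the case $s=2$ (enclosed region a single $c\times c$ block) being covered uniformly.

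The main obstacle, and the step I would spend the most care on, is the claim that two adjacent copies of any stage bond along exactly one scaled bridge located at the generator's bridge-endpoint sub-position. This is the geometric heart of the argument and is exactly what $nhb=nvb=1$ together with the self-similar bridge placement of Lemma~\ref{lem:inductive-step} deliver: I would make explicit that $nhb_{T_{s-1}}=1$ forces the contact between horizontally adjacent copies to be a single row (scaled to $c$ consecutive rows), and that this row lies in sub-copy row $y_{hb}$, so that a pier avoiding the bridge endpoint genuinely has no external bond. Everything else, namely that $W_s^c(e,f,p,q)$ is a legitimate closed window around $S_s^c(e,f,p,q)$ and that the crossing edges form a straight line of length $c$, is routine once this localization of the bonds is in hand.
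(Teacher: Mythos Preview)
Your strategy of taking both $(e,f)$ and $(p,q)$ to be piers is appealing, but the argument has a real gap at the step ``externally, the copy at $(e,f)$ has an adjacent copy only in direction $D_e^{-1}$''. This is true only at level~$s$: the copy at $(e,f)$ you are talking about sits inside $T_s^c$, but $T_s^c$ is itself the $(0,0)$ sub-block of $T_{s+1}^c$, and $(0,0)$ may well have neighbours $(0,1)$ or $(1,0)$ in $G$. If the pier $(e,f)$ happens to lie on the outer boundary of $G$ in one of those directions (e.g.\ $f=g-1$ with $(0,1)\in G$, or $e=g-1$ with $(1,0)\in G$), then the level-$s$ copy at $(e,f)$ acquires an \emph{extra} neighbour coming from stage $s+1$, and your single-bridge-endpoint exclusion on the $D_e^{-1}$ side does nothing about it.

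The Sierpinski generator $G=\{(0,0),(0,1),(1,0)\}$ already breaks your selection rule. Both piers $(0,1)$ and $(1,0)$ are bridge endpoints; take $(e,f)=(0,1)$ (north-pointing, $D_e^{-1}=S$). Your rule excludes only the south-side endpoint $(x_{vb},0)=(0,0)$, which is not a pier, so $(p,q)=(0,1)$ is permitted. But $W_s^c(0,1,0,1)$ then has glues crossing on \emph{two} sides: the south side (the intended one, toward the copy at $(0,0)$ in $T_s$) and the north side (toward the $(0,1)$ copy of $T_s$ inside $T_{s+1}$, since $(0,1)\in G$ and $(0,0)\in G$). So the enclosed configuration is not attached by a single line of $c$ glues. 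The paper's proof avoids exactly this by a case analysis on the pier type (real, parallel single-bridge, orthogonal single-bridge) and, crucially, by choosing $(e,f)$ not as a pier but as a suitably ``$D$-free'' interior point (with the explicit constraint $f<g-1$, etc.), so that the three non-attachment sides of $W_s^c(e,f,p,q)$ are provably free in the full infinite fractal, not just within $T_s$.
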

}

\ifabstract
\later{

\begin{proof}
Let $\mathbf{T}$ be any tree fractal with generator $G$. Let $c \in
\mathbb{Z}^+$ and $s \in \N\,\backslash
\{0,1\}$. According to Observation~\ref{obs:piers}, $G$ must
contain at least two piers. We will pick one of these piers
carefully. Note that there are four types of piers, depending on how
many bridges they belong to. Since, by Theorem~\ref{thm:tree-fractal},
$G$ must contain exactly one h-bridge and exactly one v-bridge, the
chosen pier can belong to no more than two bridges.  A \emph{real
pier} is a pier that does not belong to any bridge in $G$. A
\emph{single-bridge pier} belongs to exactly one bridge. A
\emph{double-bridge pier} belongs to exactly two bridges. Finally,
we will distinguish between two sub-types of single-bridge piers. If
the pier is pointing in a direction that is parallel to the direction
of the bridge (i.e., if the pier points north or south and belongs to
a v-bridge, or the pier points east or west and belongs to an
h-bridge), the pier is a \emph{parallel single-bridge pier}. If the
pier is pointing in a direction that is orthogonal to the direction of
the bridge (i.e., if the pier points north or south and belongs to an
h-bridge, or the pier points east or west and belongs to a v-bridge),
the pier is an \emph{orthogonal single-bridge pier}.

First, we observe that it is always possible to choose a pier in $G$
that is not a double-bridge pier. Indeed, if one of the piers in $G$
is a double-bridge pier, then none of the other piers in $G$ can
be double-bridge piers. Otherwise, the presence of two distinct
double-bridge points in $G$ would imply that
$\{(0,0),(0,g-1),(g-1,0),(g-1,g-1)\} \subset G$. $G$ would thus contain
two h-bridges and two v-bridges, which would contradict
Theorem~\ref{thm:tree-fractal}. Therefore, we can always choose either
a real pier or a single-bridge pier.

Second, we observe that if $G$ contains one or more real piers, we can
simply choose one of them as $(p,q)$. In this case, we pick
$(e,f)=(p,q)$, since any window of the form $W^c_s(p,q,p,q)$, where
$(p,q)$ is a real pier in $G$, must have exactly three free
sides. Therefore, $W^c_s(p,q,p,q)$ must enclose a configuration that
is connected to $\mathbf{T}^c$ via a single line of glues of length
$c$, namely on its non-free side.

Third, if $G$ does not contain any real piers, $G$ must contain at
least one single-pier bridge. If $G$ contains at least one parallel
single-bridge pier, we choose one of them, say $(p,q)$. Without loss
of generality, assume that this pier is pointing north, that it
belongs to a v-bridge and that $q=g-1$ (a similar reasoning holds if
$q=0$, or if the pier points south, or if the pier belongs to an
h-bridge and points either west or east). Now, we must pick a point
$(e,f)$ such that any window of the form $W^c_s(e,f,p,q)$ has exactly
three free sides. We distinguish two cases.
\begin{enumerate}
\item If $p=0$, that is, the pier is in the leftmost column of $G$, then
$(1,g-1) \not\in G$, since $(0,g-1)$ is a north-pointing
pier. Therefore, there must exist at least one point in $G \cap
(\{1\} \times \N_{g-1})$, say $(1,y)$, with $0 \leq y < g-1$, that is
north-free. In this case, we pick $(e,f)$ to be equal to $(1,y)$.
Now, consider any window $w$ of the form $W^c_s(e,f,p,q)$. The north
side of $w$ is free (since $(e,f)$ is north-free in $G$ and $f=y<g-1$),
the east side of $w$ is free (since $(1,g-1)\not\in G$), and the west
side of $w$ is free (since the facts that $(0,0)\in G$, $(0,g-1)\in G$
and $(0,g-1)$ is a single-bridge pier together imply that
$(g-1,g-1) \not\in G$). Furthermore, since $(0,g-1)$ is a
north-pointing pier, $S(0,g-1) \in G$.
\item If $p>0$, then $(p-1,g-1)\not\in G$. Therefore, there must exist
at least one point in $G \cap (\{p-1\} \times \N_{g-1})$, say
$(p-1,y)$, with $0 \leq y < g-1$, that is north-free. In this case, we
pick $(e,f)$ to be equal to $(p-1,y)$.  Now, consider any window $w$
of the form $W^c_s(e,f,p,q)$. The north side of $w$ is free (since
$(e,f)$ is north-free in $G$ and $f=y<g-1$), the west side of $w$ is
free (because $(p-1,g-1)\not\in G$), and the east side of $w$ is free
(since, either $p<g-1$ and $(p+1,g-1)\not\in G$, or $p=g-1$, in which
case the facts that $(g-1,g-1)\in G$, $(g-1,0)\in G$ and $(g-1,g-1)$
is a single-bridge pier together imply that $(0,g-1) \not\in
G$). Furthermore, since $(p,g-1)$ is a north-pointing pier,
$S(p,g-1) \in G$.
\end{enumerate}
Therefore, in both cases, $W^c_s(e,f,p,q)$ has exactly three free
sides and encloses a configuration that is connected
to $\mathbf{T}^c$ via a single connected horizontal line
of glues of length $c$ positioned on the south side of the window.

Finally, if $G$ does not contain any real piers nor any parallel
single-bridge piers, it must contain at least one orthogonal
single-bridge pier (since $G$ must contain at least one single-bridge
pier). We choose one of them, say $(p,q)$. Without loss of generality,
assume that this pier is pointing east, that it belongs to a v-bridge
and that $q=g-1$ (a similar reasoning holds if $q=0$, or if the pier
points west, or if the pier belongs to an h-bridge and points either
north or south). Note that, in this case, $g$ must be strictly greater
than 2, since $(p,g-1)\in G$, $(p,0)\in G$ but $(p,g-2)\not\in
G$. Now, we must pick a point $(e,f)$ such that any window of the form
$W^c_s(e,f,p,q)$ has exactly three free sides. We distinguish two
cases.
\begin{enumerate}
\item If $p<g-1$, then $(p,g-2)\not\in G$,
since $(p,g-1)$ is an east-pointing pier Therefore, there must exist
at least one point in $G \cap (\{p\} \times \N_{g-2})$, say $(p,y)$,
with $0 \leq y < g-2$, that is north-free. In this case, we pick
$(e,f)$ to be equal to $(p,y)$.  Now, consider any window $w$ of the
form $W^c_s(e,f,p,q)$. The north side of $w$ is free (since $(e,f)$ is
north-free in $G$ and $f=y<g-2<g-1$), the east side of $w$ is free
(since $p<g-1$ and $(p+1,g-1)\not\in G$), and the south side of $w$ is
free (since $(p,g-2)\not\in G$). Furthermore, since $(p,g-1)$ is an
east-pointing pier, $W(p,g-1) \in G$.
\item If $p=g-1$, that is, the pier is in the rightmost column of $G$, then
the facts that $(g-1,0) \in G$, $(g-1,g-1)\in G$ and $(g-1,g-1)$ is a
single-bridge pier together imply that $(0,g-1)\not\in G$. This,
together with the fact that $(0,0)\in G$, implies that there must
exist at least one point in $G \cap (\{0\} \times \N_{g-1})$, say
$(0,y)$, with $0 \leq y < g-1$, that is north-free. In this case, we
pick $(e,f)$ to be equal to $(0,y)$. Now, consider any window $w$ of
the form $W^c_s(e,f,p,q)$. The north side of $w$ is free (since
$(e,f)$ is north-free in $G$ and $f=y<g-1$), the east side of $w$ is free
(since $(0,g-1)\not\in G$), and the south side of $w$ is free (since
$(p,g-2)\not\in G$). Furthermore, since
$(g-1,g-1)$ is an east-pointing pier, $W(g-1,g-1) \in G$.
\end{enumerate}
Therefore, in both cases, $W^c_s(e,f,p,q)$ has exactly three free
sides and encloses a configuration that is connected
to $\mathbf{T}^c$ via a single connected horizontal line
of glues of length $c$ positioned on the west side of the window.
\qed
\end{proof}

}
\fi

%\ifabstract
%The proofs of the lemmas in this sub-section are given in the appendix (see Section~\ref{sec:appendix}).
%\fi

The proofs of the lemmas in this sub-section are omitted from this version of the paper due to lack of space.

\subsection{Application to scaled tree fractals}

The main contribution of this paper is the
following theorem.

\begin{theorem}
Let $\mathbf{T}$ be any tree fractal. If $c \in \mathbb{Z}^+$, then
$\mathbf{T}^c$ does not strictly self-assemble in the aTAM.
\end{theorem}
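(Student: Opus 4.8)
The plan is to argue by contradiction, using the Closed Window Movie Lemma (Lemma~\ref{lem:wml}) as a pumping device. Suppose some TAS $\mathcal{T}=(T,\sigma,\tau)$ strictly self-assembles $\mathbf{T}^c$, and fix a $\mathcal{T}$-terminal assembly $\alpha$ (so $\dom\alpha=\mathbf{T}^c$, which is infinite) together with a $\mathcal{T}$-assembly sequence $\vec{\alpha}$ whose result is $\alpha$. By Lemma~\ref{lem:piers} I would fix once and for all a pier $(p,q)$ and a point $(e,f)$ in $G$ so that, for every stage $s>1$, the closed window $W^c_s(e,f,p,q)$ encloses a configuration joined to the rest of $\mathbf{T}^c$ by a single connected line of glues of length exactly $c$.

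First I would set up a pigeonhole over these windows. For each $s>1$, the bond-forming submovie $\mathcal{B}(M_{\vec{\alpha},W^c_s(e,f,p,q)})$ records only the glues that eventually form bonds across the window, and by the choice of window all of these lie along one straight segment of length $c$. Hence, up to translation, such a submovie is determined by the at most $c$ glue types occurring along the segment together with their temporal order of placement; since $T$ is finite and $c$ is fixed, only finitely many bond-forming submovies exist up to translation. As $s$ ranges over the infinite set $\{2,3,4,\dots\}$, the pigeonhole principle yields two stages $i<j$ with $\mathcal{B}(M_{\vec{\alpha},W^c_i(e,f,p,q)})+\vec{c}=\mathcal{B}(M_{\vec{\alpha},W^c_j(e,f,p,q)})$ for some $\vec{c}$. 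This $\vec{c}$ is the composition of the corner-aligning vector $\vec{t}^c_{i\rightarrow j}(e,f,p,q)$ with a further shift $(x,y)$ aligning the two glue segments; because each segment sits inside a window of side $cg^{s-2}$, one checks that $0\le x,y\le m=c(g^{j-2}-g^{i-2})$, so Lemma~\ref{lem:enclosure2} guarantees that $W^c_i(e,f,p,q)+\vec{c}$ is enclosed in $W^c_j(e,f,p,q)$, and $\vec{c}\neq(0,0)$ since $i<j$ forces the two nested windows to sit at genuinely different positions.

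Next I would apply Lemma~\ref{lem:wml} with $w=W^c_i(e,f,p,q)$ and $w'=W^c_j(e,f,p,q)$ to obtain the producible assembly $\beta=\alpha'_O\cup(\alpha_I+\vec{c})$, where $\alpha'_O$ is the exterior of the large window (an exact piece of $\mathbf{T}^c$) and $\alpha_I+\vec{c}$ is the interior of the small window, translated so that its single connecting glue line coincides with that of the large window. The heart of the argument is to show that $\beta$ is $\mathcal{T}$-terminal while $\dom\beta\neq\mathbf{T}^c$. Terminality follows from the ``closed'' structure supplied by Lemma~\ref{lem:piers}: inside $w'$ the translated small blob $\alpha_I+\vec{c}$ presents exactly the local environment it had inside $w$ in $\alpha$, namely three sides facing empty space (the moat guaranteed by the three free sides of the window) and one side attached along the matching glue line. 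Thus no tile that failed to attach in $\alpha$ can attach in $\beta$, and $\alpha'_O$ cannot reach into $inside(w')$ except through that same line; hence $\partial^{\mathcal{T}}\beta=\emptyset$. But $\mathbf{T}^c$ places strictly more tiles inside $w'$ than inside $w$ (the fractal content of a window of side $cg^{s-2}$ strictly grows with $s$, as $|G|\ge 2$), whereas $\beta$ places inside $w'$ only the translated copy of that smaller content; since $i<j$ this gives $\dom\beta\neq\mathbf{T}^c$, contradicting strict self-assembly.

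I expect the main obstacle to be the terminality claim for $\beta$: one must argue carefully that the one-way replacement of the large sub-assembly by the translated small one creates no new frontier locations, and this is exactly where the extra containment hypotheses of the Closed WML (the enclosure of $w+\vec{c}$ in $w'$) together with the single-line, three-free-sides geometry delivered by Lemma~\ref{lem:piers} are indispensable. A secondary point requiring care is verifying that the identification of the two connecting segments needs only the bond-forming submovies to agree up to translation, rather than the full window movies, and that the alignment shift $(x,y)$ stays within the bound $m$ so that enclosure, and hence the legitimacy of the replacement, is preserved.
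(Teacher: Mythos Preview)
Your proposal follows the paper's proof essentially step for step: invoke Lemma~\ref{lem:piers} to obtain the nested family of closed windows $W^c_s(e,f,p,q)$ each crossed by a single length-$c$ glue line, pigeonhole on bond-forming submovies, verify enclosure via Lemma~\ref{lem:enclosure2}, and apply Lemma~\ref{lem:wml} to produce the spliced assembly $\beta=\alpha'_O\cup(\alpha_I+\vec{c})$.

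The one substantive difference is your final step. The paper stops once $\beta$ is shown producible, observes that its shape differs from $\mathbf{T}^c$, and declares a contradiction without further argument. You instead argue that $\beta$ is \emph{terminal}. Your instinct here is well placed: producibility of $\beta$ alone contradicts strict self-assembly only if either $\dom\beta\not\subseteq\mathbf{T}^c$ or $\beta$ is terminal; and in fact by self-similarity the translated small blob $\alpha_I+\vec{c}$ coincides with the copy of $T_{i-2}^c$ sitting at the bridge position inside $T_{j-2}^c$, so $\dom\beta\subsetneq\mathbf{T}^c$ and the first route is unavailable. Your terminality sketch (three-free-sides moat plus saturated glue line) is the right idea; to make it airtight you should also note that at any empty position along the connecting side of $w_j$ outside the length-$c$ segment, the set of occupied neighbours in $\beta$ is a subset of that in the terminal $\alpha$, so no tile can bind there either.

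Two minor items the paper handles that you gloss over: the explicit computation of the alignment shift $(x,y)=\bigl(0,\,bc\sum_{k=i-2}^{j-3}g^k\bigr)$ (or its symmetric variant) together with the check $y\le m$, and the need to choose $i<j$ so that the seed lies either in both windows or in neither, which is required for the proof of Lemma~\ref{lem:wml} to go through.
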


\begin{proof}

Let $\mathbf{T}$ be any tree fractal with a $g\times g$ generator $G$,
where $1<g \in \N$. Let $c$ be any positive integer.
For the sake of obtaining a contradiction, assume that $\mathbf{T}^c$
does strictly self-assemble in some TAS $\mathcal{T} =
(T,\sigma,\tau)$. Further assume that $\vec{\alpha}$ is some assembly
sequence in $\mathcal{T}$ whose result is $\alpha$, such that $\dom
\alpha = \mathbf{T}^c$.

\begin{figure}[h]
\centering
 \includegraphics[width=0.9\textwidth]{./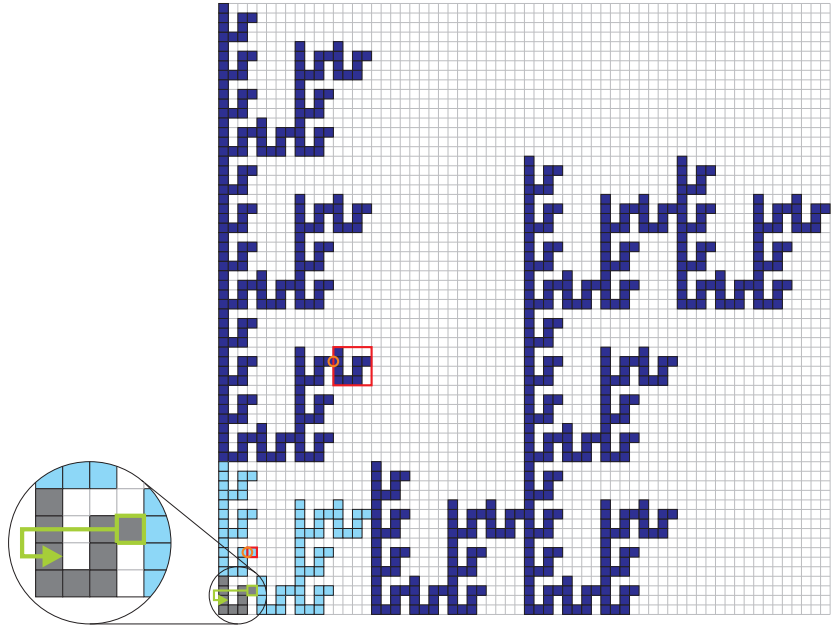}
\caption{First three stages ($s=1,2,3$) of an unscaled ($c=1$) 4-dssf
  tree fractal with an east-pointing pier at position $(3,2)$ (the
  green square). The $E$-free point $(0,1)$ is at the tip of the green
  arrow. In other words, $(p,q)=(3,2)$, and $(e,f)=(0,1)$.}
\label{fig:case2example}
\vspace{-10pt}
\end{figure}

\begin{figure}[h]
\centering
 \includegraphics[width=0.95\textwidth]{./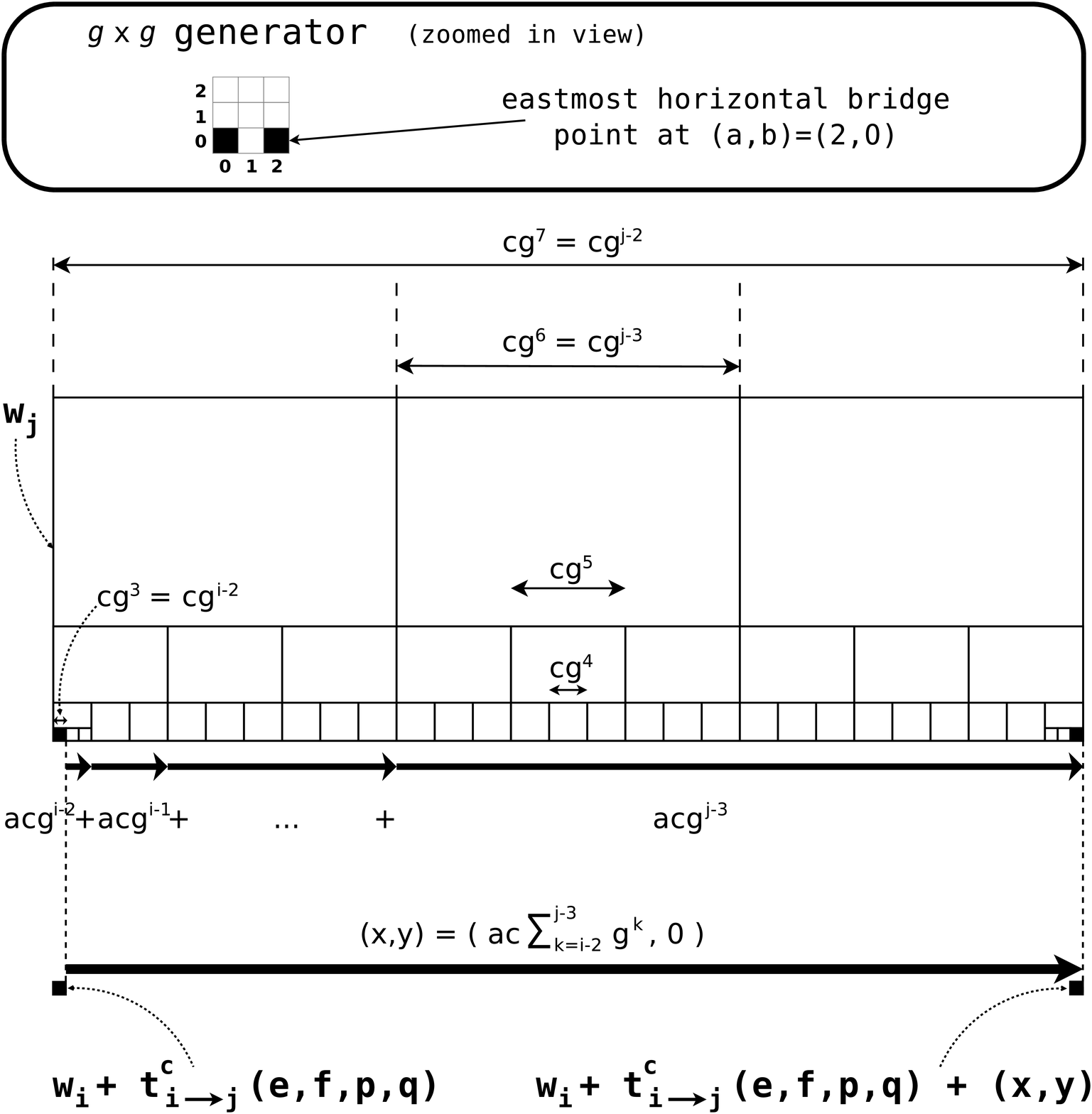}
\caption{$(x,y)$ translation needed to align $w_i$ and $w_j$ on their
  east side once their southwest corners already match. Example with a
  west-pointing pier (not shown) and $g=3$, $i=5$, $j=9$,
  $(a,b)=(2,0)$.}
\label{fig:translation}
\vspace{-10pt}
\end{figure}%\vspace*{0mm}

According to Lemma~\ref{lem:piers}, we can always pick one pier
$(p,q)$ and a point $(e,f)$, both in $G$, such that, for $1<s \in \N$,
the window $W^c_s(e,f,p,q)$, which we will abbreviate $w_s$, encloses
a configuration that is connected to $\mathbf{T}^c$ via a single line
of glues of length $c$.\footnote{Without loss of generality, we will
  assume that this line of glues is positioned on the western side of
  the windows and is thus vertical (see the orange circles in
  Figure~\ref{fig:case2example}, where $s=2$ and $s=3$ for the small
  and large red windows, respectively, and $(p,q) = (3,2)$ and
  $(e,f)=(0,1)$), because the chosen pier in our example points
  east. A similar reasoning holds for piers pointing north, south or
  west.}  The maximum number of distinct combinations and orderings of
glue positionings along this line of glues is finite.\footnote{This
  number is $(T_{glue})^{2c}\cdot(2c)!$, where $T_{glue}$ is the total
  number of distinct glue types in $T$.}  By the generalized
pigeonhole principle, since
$\left|\left\{w_s~\left|~1<s\in\N\right.\right\}\right|$ is infinite,
there must be at least one bond-forming submovie such that an infinite
number of windows generate this submovie (up to translation). Let us
pick two such windows, say, $w_i$ and $w_j$ with $i<j$, such that
$\mathcal{B}(M_{\vec{\alpha},w_i})$ and
$\mathcal{B}(M_{\vec{\alpha},w_j})$ are equal (up to translation). We
must pick these windows carefully, since as stated in the proof of
Lemma~\ref{lem:wml}, the seed of $\alpha$ must be either in both
windows or in neither. This condition can always be satisfied. The
only case where the seed is in more than one window is when it is at
position $(0,0)$ and $e=f=p=q=0$, which implies that all windows
include the origin. In all other cases, none of the windows
overlap. So, if the seed belongs to one of them, say $w_k$, then we
can pick any $i$ greater than $k$ (and $j>i$). Finally, if the seed
does not belong to any windows, then any choice of $i$ and $j>i$ will
do.

%which is always possible: if the seed $\sigma$ of
%$\alpha$ belongs to one of the windows, we pick that window to be
%$w_i$; otherwise, any choice of two windows will do. Therefore, the
%seed is either inside the small window or outside the large window.
We will now prove that $w_i$ and $w_j$ satisfy the two
conditions of Lemma~\ref{lem:wml}.

First, we compute $\vec{c}$ such that
$\mathcal{B}(M_{\vec{\alpha},w_i}) +\vec{c} =
\mathcal{B}(M_{\vec{\alpha},w_j})$. We know that $w_i
+\vec{t}^c_{i\rightarrow j}(e,f,p,q)$ and $w_j$ share their southwest
corner.  We need to perform one more translation to align the
bond-forming glues of $w_i$ and $w_j$.  We use $(a,b)$ to denote the
position of the western point in the horizontal bridge of the
generator. In our example (east-pointing pier), $a=0$ and $b$ is a
variable with domain $\N_g$ ($b=2$ in
Figure~\ref{fig:case2example}). To align the bond-forming glues of
$w_i$ and $w_j$, we must translate $w_i +\vec{t}^c_{i\rightarrow
  j}(e,f,p,q)$ by $(x,y) =
\left(0,bc\sum_{k=i-2}^{j-3}g^k\right)$. The general computation for
this translation is illustrated in Figure~\ref{fig:translation}. Since
$x\leq m$ (as defined in Lemma~\ref{lem:enclosure2}) and
$bc\sum_{k=i-2}^{j-3}g^k \leq (g-1)c\sum_{k=i-2}^{j-3}g^k =
c\left(\sum_{k=i-1}^{j-2}g^k-\sum_{k=i-2}^{j-3}g^k\right)
=c\left(g^{j-2}-g^{i-2}\right)=m$, we can apply
Lemma~\ref{lem:enclosure2} to infer that, with $\vec{c} =
\vec{t}^c_{i\rightarrow j}(e,f,p,q) + (x,y)$, $w_i+\vec{c}$ is
enclosed in $w_j$. Therefore, the second condition of
Lemma~\ref{lem:wml} holds.

Second, by construction, $\mathcal{B}(M_{\vec{\alpha},w_i}) +\vec{c} =
\mathcal{B}(M_{\vec{\alpha},w_j})$.  Therefore, the first condition of
Lemma~\ref{lem:wml} holds.

In conclusion, the two
conditions of Lemma~\ref{lem:wml} are satisfied, with $\alpha_I$ and
$\alpha_O'$ defined as the intersection of $\mathbf{T}^c$ with the inside
of $w_i$ and the outside of $w_j$, respectively. We can thus conclude
that the assembly $\alpha_I \cup (\alpha_O' -\vec{c})$ is producible
in $\mathcal{T}$. Note that this assembly is identical (up to
translation) to $\mathbf{T}^c$, except that the interior of the large
window $w_j$ is replaced by the interior of the small window
$w_i$. Since the configurations in these two windows cannot be
identical, we have proved that $\mathcal{T}$ does not strictly
self-assemble $\mathbf{T}^c$, which is a contradiction. \qed
\end{proof}

\section{Conclusion}\label{sec:conclusion}

In this paper, we made three contributions. First, we gave a new characterization of tree fractals in terms of simple geometric properties of
their generator. Second, we proved a new variant of the Window Movie
Lemma in \cite{WindowMovieLemma}, which we call the ``Closed Window
Movie Lemma.'' Third, we proved that no scaled-up version of any
discrete self-similar tree fractal strictly self-assembles in the
aTAM. In future work, we plan to extend this result to larger classes of
non-tree fractals similar to the class of pinch-point
fractals in~\cite{jSADSSF}.

\bibliographystyle{amsplain}
\bibliography{tam}

\ifabstract
\newpage
\section{Appendix}
\label{sec:appendix}
\appendix
\magicappendix
\fi

\end{document}